\documentclass[10pt]{article}
\usepackage[margin=1in]{geometry}
\usepackage{amsmath,amssymb,amsthm,mathtools}
\usepackage{enumitem}
\usepackage[hidelinks]{hyperref}
\usepackage{microtype}
\setlist[itemize]{leftmargin=1.2em,itemsep=1pt,topsep=2pt}
\setlist[enumerate]{leftmargin=1.4em,itemsep=1pt,topsep=2pt}

\newtheorem{theorem}{Theorem}
\newtheorem{proposition}{Proposition}
\newtheorem{lemma}{Lemma}
\newtheorem{definition}{Definition}
\newtheorem{remark}{Remark}

\newcommand{\rhohist}{\rho_t^{\mathrm{hist}}}
\newcommand{\rhomin}{\rho_t^{\min}}
\newcommand{\rhonew}{\rho_t^{\mathrm{new}}}
\newcommand{\rhonewmin}{\rho_t^{\mathrm{new,min}}}
\newcommand{\rhoavail}{\rho_t^{\mathrm{avail}}}

\newcommand{\lb}{\mathrm{LB}}

\newcommand{\Cans}{C_{\mathrm{ans}}}
\newcommand{\Nsat}{N_{\mathrm{sat}}}

\title{Residual Privacy Budgeting with Weighted Scarcity Allocation for Online Query Answering}
\author{Mina Khoshmehr \\ The University of Auckland \and Fernando Beltran \\ The University of Auckland}
\date{}

\begin{document}
\maketitle

\begin{abstract}
In many practical deployments of differential privacy, queries do not arrive all at once. We study online differentially private query answering under a finite zero-concentrated differential privacy (zCDP) contract. In this setting, queries arrive sequentially, carry different accuracy thresholds, and may overlap with information already released. We formulate this setting as residual privacy budgeting: for each arriving query, the mechanism first credits reusable support from previous DP outputs and then spends new budget only on the remaining support required to satisfy the current threshold. The controller separates feasible cases, where the minimal residual support is allocated exactly, from scarcity cases, where a weighted shortfall-conservation optimiser assigns limited support according to query difficulty. We define the weight using the Query Influence Factor (QIF), a diagnostic signal for query difficulty and instability rather than query importance. For scalar Gaussian exact reuse, inverse-variance fusion justifies additive support. We prove zCDP composition, residual minimality, 1-competitiveness against the offline optimum in the feasible regime, and avoidable expenditure for allocators that ignore released history. A scarcity impossibility result shows that no online allocator can guarantee a competitive ratio better than \(1/n\) in threshold satisfaction, contextualising the QIF scarcity layer as a design choice for an inherently hard online problem.
\end{abstract}

\section{Introduction}
Differential privacy (DP) has evolved from a theoretical construct into a practical requirement for modern data systems. The primary challenge in practice is not how to answer a single query privately, but how to allocate a limited global privacy budget across a sequence of queries whose future features are unknown. This turns privacy into an online resource allocation problem.

Workload-aware mechanisms improve the utility of DP by exploiting the structure in a known workload. The matrix mechanism, HDMM, histogram-consistency methods, and recent scalable matrix-mechanism variants for noisy marginals optimise measurement strategies, consistency constraints, or covariance-aware plans for fixed or batch workloads \cite{li2015,hay2010,mckenna2023hdmm,xiao2023residualplanner}. Iterative mechanisms such as MWEM and private multiplicative weights use transcript-aware measurement and update rules, but their primary object is measurement selection or distributional approximation rather than residual support allocation under per-query thresholded utility \cite{hardt2010,hardt2012}. Accuracy-aware systems such as APEx price queries against explicit utility targets, but do not explicitly credit support already present in previously released DP outputs \cite{ge2019}. Systems work on privacy-budget management, including PINQ, Sage, PrivateKube, and Cohere, treats privacy as an operational resource to be tracked, scheduled, or allocated across applications \cite{mcsherry2009pinq,lecuyer2019sage,luo2021privatekube,kuchler2024cohere}. Recent work on online query answering with multiple analysts also studies budget sharing under online query arrivals, but with a focus on fairness and ordering effects across analysts rather than residual support reuse \cite{pujol2022multi}.

Our setting is an instance of online resource allocation under uncertainty. The classical competitive analysis framework \cite{borodin1998online} measures an online algorithm's worst-case performance ratio against an offline optimum. In the online knapsack problem \cite{marchetti1995stochastic}, items of varying size and value arrive sequentially and must be irrevocably accepted or rejected into a capacity-limited knapsack. Our setting differs in two ways: (1) support is additive, so partial allocation is possible and multiple allocations to the same query accumulate via fusion; and (2) queries can overlap with previously released outputs, enabling reuse. We show that these features make the feasible regime strictly easier than general online knapsack, where constant competitive ratios require randomisation or distributional assumptions \cite{zhou2008budget}. In the scarcity regime, however, the problem retains the hardness of online selection.

Our mechanism differs from prior DP work by separating two questions: how much support is still needed after crediting released history, and how scarce remaining budget should be distributed when not all residual targets can be met. The first question is handled by residual allocation, while the second is handled by a difficulty-weighted scarcity layer. We define this weight using the Query Influence Factor (QIF), which is not an importance ranking over queries, but a diagnostic signal for matching privacy support to query difficulty and instability. We do not claim superiority over transcript-aware online matrix, HDMM-style, MWEM-style, or multi-analyst mechanisms. Extending the framework to those settings would require replacing the scalar support term \(\rho_t^{\mathrm{hist}}\) with a workload-specific representation of how previous noisy releases reduce the uncertainty of the current query.

\section{Problem Statement}
In practical differential privacy (DP) deployments, demand is often sequential: queries arrive at different times, require different levels of accuracy, and may overlap with information already released. Static workload-aware approaches optimise a known workload in advance. By contrast, in our setting the curator observes only the current query and the public transcript so far; the future workload is unknown at decision time.
\paragraph{Setting assumptions.}
We study a trusted-curator, online allocation setting with a fixed zCDP contract.
The query sequence is non-adaptive: queries may arrive sequentially and may overlap with earlier releases, but their generation does not depend on the realised randomness of previous DP outputs. The allocator observes only the current query, declared accuracy requirement, remaining budget, and the public transcript of previous DP releases. It does not optimise a known workload in advance. In the main formal results, historical reuse is triggered only by exact public-descriptor equality, and scalar support additivity is proved only for Gaussian exact reuse. Extensions to correlated linear workloads require a non-scalar support representation and are left outside the present scope.

Let $D$ be a sensitive dataset and let $q_1,\ldots,q_K$ be a non-adaptive sequence of bounded-sensitivity queries. At step $t$, query $q_t$ arrives with accuracy threshold $U_t$ and confidence level $1-\gamma_t$. Here \(K\) denotes the finite analysis horizon used to state composition and comparison results. It is not a separate privacy parameter; privacy loss is controlled by the zCDP contract \(\rho_{\mathrm{tot}}\). The curator operates under a finite zCDP contract $\rho_{\mathrm{tot}}=\rho_{\mathrm{est}}+\rho_{\mathrm{ans}}$: diagnostics consume estimation budget and new answers consume answering budget. The public transcript after $t-1$ releases is
\begin{equation}
H_{t-1}=\{(q_j,\tilde q_j,\rho_j)\}_{j<t},
\end{equation}
where $\tilde q_j$ is a DP answer and $\rho_j$ is its accounted support under the selected QueryModel. Each query has a monotone error surrogate $r_t(\rho)$, non-increasing in effective support $\rho$. The from-scratch support required by the declared threshold is
\begin{equation}
\rhomin=\inf\{\rho\geq 0:r_t(\rho)\leq U_t\}.
\end{equation}
The value $\rhomin$ is random before Phase I, because it is computed from a privatised diagnostic; all feasibility and uniqueness claims are conditional on the realised diagnostic transcript. Historical support $\rhohist\geq0$ is computed from public query descriptors and prior DP outputs. In the main formal results, reuse is triggered by exact public-descriptor equality, so $\rhohist$ is the precision-equivalent support of an exact historical Gaussian release. Approximate reuse with an external or public-domain bias envelope is appendix-only. The residual need is
\begin{equation}
\label{eq:residual}
\rhonewmin=\max\{0,\rhomin-\rhohist\}.
\end{equation}

In this paper, the support-additivity argument is proved for the scalar Gaussian exact-reuse case. This means that the current query exactly matches a previously released query, and both the historical and new answers are Gaussian estimates of the same scalar quantity. In that setting, inverse-variance fusion gives effective support
\begin{equation}
\rho_t^{\mathrm{eff}}
=
\rho_t^{\mathrm{hist}}
+
\rho_t^{\mathrm{new}},
\end{equation}
which justifies the residual rule in~\eqref{eq:residual}. For more complex workloads, such as correlated histograms, range queries, or general linear queries, historical support may not be representable by a single scalar \(\rho_t^{\mathrm{hist}}\). Extending the controller to those workloads would require specifying how previous noisy releases reduce uncertainty for the current query, through an appropriate support representation, diagnostic rule, and fusion rule.
Approximate reuse is treated separately in Appendix~\ref{app:bias}. There, a public or externally supplied bias envelope \(B_t\) controls the mismatch between a historical query and the current query, yielding a bias--variance MSE bound.
The main theorems in the paper do not rely on approximate reuse.

\paragraph{Contributions.}
First, we formulate online DP query answering as residual support allocation under thresholded utility, where new privacy budget is charged only for support not already provided by released DP outputs. Second, we state an explicit zCDP accounting boundary for the public transcript: diagnostics and new answers consume privacy budget, while residual allocation and fusion are post-processing of previous DP outputs and public state. Third, we prove residual minimality in the feasible regime, give an explicit scarcity rule showing how any positive difficulty weight affects allocation under shortage, and prove that the residual allocator is 1-competitive against the offline optimum in answering expenditure. Fourth, we define QIF as an interpretable diagnostic signal for query difficulty and instability, contextualised by a scarcity impossibility result showing that no online allocator can guarantee a competitive ratio better than \(1/n\) in threshold satisfaction. Fifth, we provide formal separations from accuracy-aware non-crediting allocators \cite{ge2019} and budget-partitioning allocators \cite{pujol2022multi}, together with an exact-repeat witness separation and a diagnostic-amortisation identity.

\section{Mechanism}
For each arriving query, the controller proceeds through four phases. \emph{Lookup:} search $H_{t-1}$ using public query descriptors; exact matches are used in the main results. \emph{Diagnostics:} release a diagnostic signal at cost $\rho_{\mathrm{probe},t}$ and compute $\rhomin$ by a conservative lower-confidence construction. \emph{Residual allocation:} compute~\eqref{eq:residual}; if $\rhonewmin\leq\rhoavail$, allocate $\rhonew=\rhonewmin$, otherwise solve the scarcity problem in Section~\ref{sec:scarcity}. \emph{Execution and fusion:} if $\rhonew>0$, release a new answer at support $\rhonew$ and fuse it with reusable historical outputs.
Fusion is the post-processing step that combines reusable historical DP outputs with any newly released DP answer for the current query. In the scalar Gaussian exact-reuse case, the historical and new answers estimate the same scalar query, so inverse-variance fusion gives query-level effective support
\begin{equation}
\rho_t^{\mathrm{eff}}
=
\rho_t^{\mathrm{hist}}
+
\rho_t^{\mathrm{new}}.
\end{equation}
This effective support is used for accuracy, while only \(\rho_t^{\mathrm{new}}\) is newly charged to the answering budget.

The decision logic is therefore simple. The controller first checks whether released history already supports the query; if so, no new answering budget is spent. If additional support is needed, the controller computes the residual requirement. QIF is consulted only when this residual exceeds the available budget. Thus residual need decides whether and how much new budget is required, while QIF affects only the scarcity allocation.

We do not claim that post-processing itself is new. Post-processing is the accounting boundary that makes the public allocation transcript safe to publish.

\begin{theorem}[Public allocation transcript]
\label{thm:postprocessing}
Let $\widetilde S_t$ denote the DP outputs available at step $t$ and let $s_t$ denote public state. If $\rhonew=f_t(\widetilde S_t,s_t)$, where $f_t$ is deterministic or randomised independently of $D$, then publishing $\rhonew$ incurs no privacy loss beyond that already paid for $\widetilde S_t$.
\end{theorem}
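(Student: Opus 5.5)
The argument is a direct application of the post-processing property of zero-concentrated differential privacy. I would first fix the formal setting: the outputs $\widetilde S_t$ are generated by a mechanism $\mathcal M_t$ on $D$ that satisfies $\rho$-zCDP, with $\rho$ the cumulative budget already charged for $\widetilde S_t$ (diagnostics plus prior answers, via sequential zCDP composition). The public state $s_t$ contains only data-independent quantities: query descriptors, thresholds $U_t$, $\gamma_t$, the contract $\rho_{\mathrm{tot}}$, and remaining budget computed from earlier public allocations. By induction over $t$, each such quantity is itself a function of earlier DP outputs and public constants. So I would treat $s_t$ as either a constant or a deterministic function of $\widetilde S_t$, and absorb it into the map.

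The main step is then to write the published quantity as $\mathcal P_t\circ\mathcal M_t$, where
\[
\mathcal P_t(y)=f_t(y,s_t;\omega),
\]
and $\omega$ is internal randomness independent of $D$. For any neighbouring $D,D'$, the Rényi divergences satisfy the data-processing inequality
\[
D_\alpha\bigl(\mathcal P_t(\mathcal M_t(D))\,\|\,\mathcal P_t(\mathcal M_t(D'))\bigr)\le D_\alpha\bigl(\mathcal M_t(D)\,\|\,\mathcal M_t(D')\bigr)\le \rho\alpha
\]
for all $\alpha>1$. This holds for any Markov kernel, and a randomised $f_t$ whose coins are independent of $D$ is exactly such a kernel. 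Hence $\rhonew$ is $\rho$-zCDP with the same $\rho$. The same bound also covers the joint release $(\widetilde S_t,\rhonew)$, since that pair is again a kernel applied to $\widetilde S_t$. This joint version is what "no loss beyond that already paid" means operationally.

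I expect the main obstacle to be checking that $s_t$ really carries no hidden dependence on $D$. In particular, $\rhomin$ and the scarcity weights (QIF) must be computed only from the privatised diagnostic and not from raw data. The budget-remaining counter must likewise depend only on earlier $\rho_j$ values, which are themselves post-processed. I would handle this with a short induction on $t$: at each step, every component of $s_{t+1}$ is a measurable function of $(\widetilde S_t, s_t)$ and independent coins. With that in place, the post-processing step above applies at every $t$.
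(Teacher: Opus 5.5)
Your proposal is correct and takes the same route as the paper. The paper's sketch just notes that the allocation is a measurable function of privatised outputs and public state and invokes post-processing closure of zCDP; your Rényi data-processing argument, the remark on the joint release $(\widetilde S_t,\rhonew)$, and the induction showing that $s_t$ (including $\rhomin$, the QIF weights and the remaining-budget counter) has no raw-data dependence make that appeal explicit. One small precision: the coins $\omega$ must be fresh, meaning independent of the noise inside $\widetilde S_t$ as well as of $D$ (for instance, taking $\omega$ to be that noise would let $f_t$ recover $q(D)$), and only under that reading is $f_t$ a genuine Markov kernel on $\widetilde S_t$.
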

\begin{proof}[Proof sketch]
The allocation is a measurable function of privatised outputs and public state. By post-processing closure of zCDP, such a map cannot increase privacy loss.
\end{proof}

\begin{theorem}[End-to-end zCDP]
\label{thm:zcdp}
If $\sum_t\rho_{\mathrm{probe},t}\leq \rho_{\mathrm{est}}$ and $\sum_t\rhonew\leq \rho_{\mathrm{ans}}$, then the full mechanism satisfies $\rho_{\mathrm{tot}}$-zCDP.
\end{theorem}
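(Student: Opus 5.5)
The plan is to view the full mechanism as an adaptive composition of Gaussian releases, interleaved with post-processing steps, and then apply the adaptive composition theorem for zCDP. At each step \(t\) the mechanism interacts with \(D\) in at most two places. The first is the diagnostic release, which is \(\rho_{\mathrm{probe},t}\)-zCDP by construction. The second is the new answer, released only when \(\rhonew>0\) at support \(\rhonew\). For a Gaussian mechanism on a query of sensitivity \(\Delta\) with variance \(\sigma^2=\Delta^2/(2\rhonew)\), this release is \(\rhonew\)-zCDP. Everything else performed at step \(t\) touches \(D\) only through previously released outputs and public state. This includes lookup in \(H_{t-1}\), the computation of \(\rhomin\), \(\rhohist\) and \(\rhonewmin\), the choice between the feasible and scarcity branches, the scarcity optimiser (including QIF weights, provided these are computed from privatised diagnostics), and inverse-variance fusion. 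By Theorem~\ref{thm:postprocessing} and post-processing closure, none of these steps adds privacy loss.

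I would make this precise by writing the transcript as a sequence of mechanisms \(M_1,M_2,\ldots\), alternating probe and answer releases. Each \(M_i\) takes as auxiliary input the previous outputs \(\widetilde S\) and public state, and its zCDP parameter \(\rho_i\) may itself be a function of those earlier outputs. Adaptive composition gives a total of \(\sum_i\rho_i\), provided the sum is bounded by a fixed constant for every realised transcript. The main subtlety is exactly this: \(\rhonew\) is chosen adaptively, and standard composition assumes fixed parameters.

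My first approach to this obstacle is to use the paper's assumption that \(\sum_t\rhonew\le\rho_{\mathrm{ans}}\) holds on every realisation. The controller enforces it by never allocating more than \(\rhoavail\). I would then invoke zCDP composition in its privacy-filter form: with adaptively chosen parameters whose running sum is almost surely capped by \(\rho_{\mathrm{ans}}\), the composition remains \(\rho_{\mathrm{ans}}\)-zCDP. If a fully self-contained argument is needed, I would prove this by padding. Replace each answer release at support \(\rhonew\) with a release whose remaining unused budget is spent on a discarded dummy Gaussian query at the very end. Equivalently, bound the Rényi divergence of the joint output by iteratively conditioning on the prefix and using the chain rule for Rényi divergence. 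Each conditional step contributes \(\alpha\rho_i\), and the almost-sure cap bounds the total by \(\alpha\rho_{\mathrm{tot}}\).

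Summing the two streams gives \(\sum_t\rho_{\mathrm{probe},t}+\sum_t\rhonew\le\rho_{\mathrm{est}}+\rho_{\mathrm{ans}}=\rho_{\mathrm{tot}}\). Fusion does not change this total, since it is post-processing and \(\rhohist\) is never recharged. The full mechanism is therefore \(\rho_{\mathrm{tot}}\)-zCDP. One point to flag is that non-adaptivity of the query sequence is not needed for privacy. Even adaptive query choice would be covered by adaptive composition, so the theorem's conclusion does not rest on that assumption.
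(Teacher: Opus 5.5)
Your proposal is correct and follows the paper's own argument. Only the probes and the new Gaussian answers touch $D$; lookup, residual/scarcity allocation, QIF weighting and fusion are post-processing (Theorem~\ref{thm:postprocessing}); and zCDP composition sums the two streams to at most $\rho_{\mathrm{est}}+\rho_{\mathrm{ans}}=\rho_{\mathrm{tot}}$.

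Your explicit handling of the adaptively chosen $\rhonew$ through a R\'enyi/zCDP privacy filter is extra rigour that the paper's sketch, which just cites additive composition, omits. Padding with a dummy release does not by itself remove the adaptivity. The naive chain rule with a supremum over prefixes would give $\sum_i\sup\rho_i$ rather than $\sup\sum_i\rho_i$. What makes the almost-sure cap suffice is the supermartingale (Feldman--Zrnic) argument, using that $\rho_i$ is predictable and $\sum_i\rho_i\le\rho_{\mathrm{tot}}$ on every realisation.
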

\begin{proof}[Proof sketch]
Only diagnostics and new query execution access $D$. Lookup, residual computation, allocation, and fusion are post-processing by Theorem~\ref{thm:postprocessing}. Additive zCDP composition gives the claim.
\end{proof}

\section{Residual and Weighted Scarcity Allocation}
\label{sec:scarcity}
The controller separates residual need from difficulty-sensitive scarcity allocation. Residual need determines whether new budget is required; difficulty weights determine how scarce support is matched to queries only when the residual cannot be fully supported.

\begin{theorem}[Residual minimality]
\label{thm:minimality}
Each query has an error surrogate \(r_t(\rho)\), which maps effective support \(\rho\) to the certified error bound for query \(q_t\). We assume \(r_t(\rho)\) is strictly decreasing on the relevant support interval, so larger support corresponds to lower error. Conditional on the realised diagnostic transcript, if $\rhonewmin\leq \rhoavail$, then $\rhonew=\rhonewmin$ is the unique minimal new allocation satisfying the threshold for $q_t$.
\end{theorem}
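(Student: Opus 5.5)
We fix the realised diagnostic transcript, so that $\rhomin$ and $\rhohist$ are deterministic numbers and $r_t$ is a fixed function. Every claim below is made conditional on that transcript. The argument then reduces the threshold condition to a single inequality on $\rhonew$, using two facts. By inverse-variance fusion in the scalar Gaussian exact-reuse case, the effective support is $\rho_t^{\mathrm{eff}}=\rhohist+\rhonew$. By construction, the threshold is met precisely when $r_t(\rho_t^{\mathrm{eff}})\le U_t$.

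\textbf{Step 1 (characterise the feasible set).} We first show that, on the relevant support interval, $r_t(\rho)\le U_t$ holds if and only if $\rho\ge\rhomin$. The direction ($\Leftarrow$) follows from monotonicity together with attainment of the infimum. The direction ($\Rightarrow$) follows from the definition of $\rhomin$ as the infimum of such $\rho$. Attainment is the point that needs care. I would first try right-continuity of $r_t$ at $\rhomin$, which is natural for a certified error bound of the form $c/\sqrt{\rho}$. If that is unavailable, I would instead adopt the convention that the lower-confidence construction returns an attained threshold. Strict decrease then guarantees that the set $\{\rho: r_t(\rho)\le U_t\}$ is exactly the half-line $[\rhomin,\infty)$ intersected with the interval.

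\textbf{Step 2 (translate to new support).} Substituting the fused support, the threshold holds if and only if $\rhohist+\rhonew\ge\rhomin$. Combined with the constraint $\rhonew\ge0$, the set of feasible allocations is
\[
\{\rhonew\ge 0:\ \rhonew\ge \rhomin-\rhohist\}=[\max\{0,\rhomin-\rhohist\},\infty)=[\rhonewmin,\infty).
\]
Budget feasibility additionally requires $\rhonew\le\rhoavail$. The hypothesis $\rhonewmin\le\rhoavail$ ensures that the feasible set intersected with $[0,\rhoavail]$ is the nonempty interval $[\rhonewmin,\rhoavail]$.

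\textbf{Step 3 (minimality and uniqueness).} The minimum of this interval is $\rhonewmin$, and it is attained. Any other feasible allocation is strictly larger, so the minimiser is unique. In the case $\rhohist\ge\rhomin$, the residual is $0$: no new budget is charged, which is consistent with the mechanism skipping execution. Finally, we use that the hypothesis places us in the feasible branch of the controller, so the controller does output $\rhonewmin$.

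\textbf{Main obstacle.} The difficult step is Step 1, specifically the boundary behaviour. Strict decrease alone gives uniqueness of the threshold point. It does not give attainment if $r_t$ jumps at $\rhomin$, so I expect to need right-continuity, which holds for the Gaussian surrogate. A secondary subtlety is the restriction of $r_t$ to the ``relevant support interval''. I would handle this by checking that $\rhomin$ and $\rhohist+\rhonewmin$ both lie in that interval, so that the equivalence in Step 1 applies.
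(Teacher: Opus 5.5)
Your proposal is correct and follows the paper's own route. Both arguments use additive support $\rho_t^{\mathrm{eff}}=\rhohist+\rhonew$ to reduce the threshold to $\rhohist+\rhonew\geq\rhomin$, observe that no new budget is needed when $\rhohist\geq\rhomin$, and otherwise note that anything below $\rhomin-\rhohist$ cannot certify the target. Your extra attention to attainment at $\rhomin$ (right-continuity of $r_t$, which holds for a continuous surrogate such as $c/\sqrt{\rho}$) fills a point the paper's sketch leaves implicit: strict decrease alone does not guarantee that $\rhonewmin$ itself meets the threshold, and the ``below fails'' direction actually comes from the infimum definition rather than monotonicity.
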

\begin{proof}[Proof sketch]
If $\rhohist\geq\rhomin$, no new support is needed. Otherwise any allocation below $\rhomin-\rhohist$ leaves total support below $\rhomin$, and strict monotonicity of $r_t$ cannot certify the target.
\end{proof}

\begin{remark}[QIF is inactive in the feasible regime]
When $\rhonewmin\leq\rhoavail$, the allocation is $\rhonew=\rhonewmin$, independently of the QIF weight. QIF cannot inflate a feasible query as if it were unsupported. It affects allocation only in the scarcity regime, where $\rhonewmin>\rhoavail$.
\end{remark}

In scarcity, $\rhonewmin>\rhoavail$. For any positive difficulty weight $w_t>0$, the allocator solves
\begin{equation}
\label{eq:scarcity}
\min_{0\leq\rho\leq\rhoavail}w_t[\rhomin-(\rhohist+\rho)]_+^2+\mu_t\rho,
\end{equation}
where $\mu_t>0$ is a conservation parameter that penalises current expenditure under scarcity.

This proposition is not intended as a new optimisation result; its role is to make the controller's scarcity behaviour explicit and to show how the difficulty weight enters the allocation rule.
\begin{proposition}[Closed-form scarcity rule]
\label{thm:scarcity}
Let \(a_t=\rho_t^{\min}-\rho_t^{\mathrm{hist}}\). In the active scarcity regime
\(a_t>\rho_t^{\mathrm{avail}}\), the weighted shortfall-conservation objective
has the explicit solution
\[
\rho_t^\star
=
\Pi_{[0,\rho_t^{\mathrm{avail}}]}
\left(
a_t-\frac{\mu_t}{2w_t}
\right).
\]
\end{proposition}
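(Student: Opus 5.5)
The plan is to reduce \eqref{eq:scarcity} to a one-dimensional strictly convex problem on the interval $[0,\rhoavail]$ and read off the projected stationary point. First I will rewrite the shortfall in terms of $a_t$: since $\rhomin-(\rhohist+\rho)=a_t-\rho$, the objective becomes
\[
g(\rho)=w_t\,[a_t-\rho]_+^2+\mu_t\rho,\qquad 0\le\rho\le\rhoavail .
\]
In the active scarcity regime $a_t>\rhoavail\ge 0$, every feasible $\rho$ satisfies $\rho\le\rhoavail<a_t$. Hence $[a_t-\rho]_+=a_t-\rho$ on the whole feasible set, and the positive-part kink never enters. On $[0,\rhoavail]$ the objective is therefore the smooth quadratic $g(\rho)=w_t(a_t-\rho)^2+\mu_t\rho$, with $g''(\rho)=2w_t>0$ because $w_t>0$. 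This step is the main point to get right. Without the scarcity hypothesis, the kink at $\rho=a_t$ could lie inside the interval, and the formula would need a separate case, where the minimiser might be $\min\{a_t,\rhoavail\}$ rather than the projected point. So I will state explicitly where $a_t>\rhoavail$ is used.

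Next I will compute the unconstrained minimiser. Setting $g'(\rho)=-2w_t(a_t-\rho)+\mu_t=0$ gives $\hat\rho=a_t-\mu_t/(2w_t)$. The last step is the standard fact that a strictly convex univariate function restricted to a closed interval $[\ell,u]$ has the unique minimiser $\Pi_{[\ell,u]}(\hat\rho)$. I will justify it by cases on the sign of $g'$ at the endpoints:
\begin{itemize}
\item if $\hat\rho\le 0$, then $g'\ge 0$ on $[0,\rhoavail]$, so $g$ is nondecreasing there and the minimiser is $0$;
\item if $\hat\rho\ge\rhoavail$, then $g'\le 0$ on the interval, so $g$ is nonincreasing and the minimiser is $\rhoavail$;
\item otherwise, $\hat\rho$ is an interior stationary point of a strictly convex function and is therefore the unique global minimiser.
\end{itemize}
In all three cases $\rho_t^\star=\Pi_{[0,\rhoavail]}(a_t-\mu_t/(2w_t))$.

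Strict convexity also gives uniqueness, which I will note. Finally, I will interpret the formula: a larger $w_t$ moves the allocation toward $\rhoavail$, and a larger $\mu_t$ moves it toward $0$. This makes explicit how the difficulty weight enters the scarcity rule.
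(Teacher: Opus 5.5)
Your proof is correct and follows the paper's own argument. The scarcity hypothesis makes the hinge active on all of $[0,\rho_t^{\mathrm{avail}}]$, the objective reduces to the strictly convex quadratic $w_t(a_t-\rho)^2+\mu_t\rho$, and its stationary point $a_t-\mu_t/(2w_t)$ is projected onto the interval. Your endpoint sign analysis, using the derivative $2w_t(\rho-\hat\rho)$, makes explicit the projection step that the paper's sketch leaves implicit.

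One small correction to your side remark. The term $[a_t-\rho]_+^2$ is continuously differentiable, and $\hat\rho=a_t-\mu_t/(2w_t)<a_t$ whenever $\mu_t>0$. So the objective is strictly decreasing and then strictly increasing around $\hat\rho$ even when $a_t\le\rho_t^{\mathrm{avail}}$. The same projected formula therefore holds in that case too, and for $a_t>0$ the minimiser is not $\min\{a_t,\rho_t^{\mathrm{avail}}\}$. The scarcity hypothesis is a simplifying convenience, not a necessity.
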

\begin{proof}[Proof sketch]
In active scarcity, the hinge is active on the feasible interval, so the objective is the strictly convex quadratic \(w_t(a_t-\rho)^2+\mu_t\rho\). Differentiating gives the unconstrained minimiser \(a_t-\mu_t/(2w_t)\), and projection gives the constrained solution.
\end{proof}

\section{QIF-weighted scarcity differentiation}

The framework's correctness does not depend on the specific form of the difficulty weight beyond positivity. We define the weight using the \emph{Query Influence Factor} (QIF), not in the information-theoretic sense of quantitative information flow.

\begin{definition}[Query Influence Factor]
\label{def:qif}
For each active-scarcity query \(i\), let \(m_i^{\mathrm{LB}}>0\) be a private lower-confidence scale surrogate, \(\tilde v_i\geq 0\) a diagnostic variability or instability signal, and \(b_i\geq 0\) a rarity or imbalance signal. Define
\begin{equation}
\label{eq:qif}
\operatorname{QIF}_i
=
\frac{\tilde v_i+\beta b_i}{g(m_i^{\mathrm{LB}})},
\qquad
w_i=\operatorname{QIF}_i^\alpha,
\end{equation}
where \(\alpha>0\), \(\beta\geq 0\), and \(g:\mathbb{R}_{>0}\to\mathbb{R}_{>0}\) is continuous and increasing. If \(\operatorname{QIF}_i=0\), the implementation may use an arbitrarily small positive floor so that \(w_i>0\).
\end{definition}

QIF is not an importance ranking over queries. It is a diagnostic difficulty weight used to match limited privacy support to the intrinsic difficulty and instability of each query.

\begin{theorem}[QIF-weighted scarcity differentiation]
\label{thm:qif-differentiation}
Consider two active-scarcity queries \(s\) and \(t\), with residual gaps \(a_i=\rho_i^{\min}-\rho_i^{\mathrm{hist}}\). Suppose both queries use the scarcity objective~\eqref{eq:qif-scarcity-objective}
with common \(\mu>0\), and suppose projection does not clip either optimiser. Then the QIF weight is non-decreasing in \(\tilde v_i\) and \(b_i\), and non-increasing in \(m_i^{\mathrm{LB}}\). Moreover,
\begin{equation}
\label{eq:qif-compact-result}
\rho_i^\star=a_i-\frac{\mu}{2w_i},
\qquad
S_i^\star=a_i-\rho_i^\star=\frac{\mu}{2w_i}.
\end{equation}
If \(a_s=a_t\) and \(w_t>w_s\), then
\begin{equation}
\label{eq:qif-compact-difference}
\rho_t^\star-\rho_s^\star=
\frac{\mu}{2}\left(\frac{1}{w_s}-\frac{1}{w_t}\right)>0,
\qquad
S_t^\star<S_s^\star.
\end{equation}
\end{theorem}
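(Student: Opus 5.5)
The argument has three parts: monotonicity of the weight, the unclipped allocation rule, and the comparison of two queries. I take the objective referenced as~\eqref{eq:qif-scarcity-objective} to be the per-query scarcity problem~\eqref{eq:scarcity} with $w_i=\operatorname{QIF}_i^\alpha$ and a common $\mu_i=\mu$.

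\emph{Monotonicity of the weight.} I will read it off Definition~\ref{def:qif}. The numerator $\tilde v_i+\beta b_i$ is non-decreasing in $\tilde v_i$, and also in $b_i$ because $\beta\ge0$. The denominator $g(m_i^{\mathrm{LB}})$ is positive and increasing, so $1/g$ is decreasing in $m_i^{\mathrm{LB}}$. Because the numerator is non-negative, $\operatorname{QIF}_i$ is non-decreasing in $\tilde v_i$ and $b_i$ and non-increasing in $m_i^{\mathrm{LB}}$. The map $x\mapsto x^\alpha$ is increasing on $[0,\infty)$ for $\alpha>0$, so $w_i$ inherits these properties. The positive floor used when $\operatorname{QIF}_i=0$ is a constant substitution at the boundary, so it does not break monotonicity as long as the floor is applied as $\max\{\text{floor},\cdot\}$; I will note this explicitly.

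\emph{Allocation and shortfall.} I will invoke Proposition~\ref{thm:scarcity} with $\mu_t=\mu$. In the active scarcity regime $a_i>\rho_i^{\mathrm{avail}}\ge\rho$, so the hinge is active on the whole feasible interval, and the objective is the strictly convex quadratic $w_i(a_i-\rho)^2+\mu\rho$. Its minimiser is $\Pi_{[0,\rho_i^{\mathrm{avail}}]}(a_i-\mu/(2w_i))$. Under the no-clipping hypothesis the projection is the identity, which gives $\rho_i^\star=a_i-\mu/(2w_i)$. Subtracting from $a_i$ gives $S_i^\star=\mu/(2w_i)$, establishing~\eqref{eq:qif-compact-result}.

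\emph{Comparison of the two queries.} With $a_s=a_t$, subtracting the two expressions for $\rho_i^\star$ gives $\rho_t^\star-\rho_s^\star=\tfrac{\mu}{2}(1/w_s-1/w_t)$. This is strictly positive because $w_t>w_s>0$ implies $1/w_s>1/w_t$, and $\mu>0$. Since $S_i^\star=a_i-\rho_i^\star$ and the gaps are equal, the same inequality gives $S_t^\star<S_s^\star$. This establishes~\eqref{eq:qif-compact-difference}.

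\emph{Main obstacle.} The algebra is routine. The step needing care is confirming that each query's problem is solved independently with its own $\rho_i^{\mathrm{avail}}$ and that the hinge is active throughout $[0,\rho_i^{\mathrm{avail}}]$, so that Proposition~\ref{thm:scarcity} applies verbatim. I would state this as the interpretation of~\eqref{eq:qif-scarcity-objective}. A second point is that the result relies on the no-clipping assumption. Without it, a query could be clipped at $\rho_i^{\mathrm{avail}}$ or at $0$, and then only the weak inequality $\rho_t^\star\ge\rho_s^\star$ would follow, by monotonicity of the projection, and only when the two queries share the same $\rho^{\mathrm{avail}}$.
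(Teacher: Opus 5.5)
Your proposal is correct and follows essentially the same three-part argument as the paper: monotonicity is read off Definition~\ref{def:qif} and preserved by $x\mapsto x^\alpha$, the allocation is the unclipped minimiser of the strictly convex quadratic, and the comparison is a subtraction under $a_s=a_t$. Citing Proposition~\ref{thm:scarcity} instead of re-deriving the first-order condition is only cosmetic, and your caveats about the positive floor and the non-negative numerator are sound refinements that the paper leaves implicit.
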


The scarcity objective referenced above is:
\begin{equation}
\label{eq:qif-scarcity-objective}
\min_{0\leq \rho_i\leq \rho_i^{\mathrm{avail}}}
w_i(a_i-\rho_i)^2+\mu\rho_i.
\end{equation}

\paragraph{Proof.} See Appendix~\ref{app:proofs}.

\section{Comparisons, Competitive Analysis, and Limits}
\label{sec:comparisons}
We establish three types of comparison: structural separations from allocators that do not credit released history, a competitive ratio analysis against the offline optimum, and a scarcity impossibility result.

\subsection{Separation from non-crediting allocators}
\begin{definition}[State-agnostic threshold-pricing]
\label{def:sa}
An allocator $\pi^{\mathrm{SA}}$ may depend on $(U_t,\gamma_t,\rhoavail)$ and the current privatised diagnostic, but not on $\rhohist$. It is threshold-pricing if, whenever the from-scratch target is feasible, it allocates enough support to satisfy the threshold from scratch.
\end{definition}

\begin{definition}[Accuracy-aware non-crediting allocator]
\label{def:ac}
An allocator \(\pi^{\mathrm{AC}}\) is \emph{accuracy-aware non-crediting} if, at each step \(t\), it computes a from-scratch support allocation that depends on \((q_t, U_t, \gamma_t, \rho_t^{\mathrm{avail}})\) and possibly a privatised diagnostic, but is functionally independent of \(\rho_t^{\mathrm{hist}}\). Unlike the purely state-agnostic class (Definition~\ref{def:sa}), \(\pi^{\mathrm{AC}}\) may use query-specific diagnostic information, but it does not fuse or credit previously released DP outputs.
\end{definition}

This definition captures the essential structural property of APEx's pricing step \cite{ge2019}: queries are costed against their accuracy target, but no mechanism exists to reduce that cost based on what has already been released. The definition does not claim to capture every implementation detail of APEx (e.g., its SVT-based exploration strategy), only the property relevant to the separation.

\begin{definition}[Budget-partitioning allocator]
\label{def:bp}
An allocator \(\pi^{\mathrm{BP}}\) divides \(\rho_{\mathrm{ans}}\) among \(A\) analysts, giving analyst \(a\) a sub-budget \(\rho^{(a)}\) with \(\sum_a \rho^{(a)}\leq\rho_{\mathrm{ans}}\). Within each analyst's sub-budget, queries are answered from scratch without crediting the public transcript \(H_{t-1}\).
\end{definition}

This abstracts the structural property of Pujol et al.\ \cite{pujol2022multi} relevant to the comparison: budget is partitioned among competing analysts, but per-query expenditure is not reduced by fusing with previously released outputs. Pujol et al.\ address a different design question---how to partition a shared budget fairly---rather than how to reduce per-query expenditure through reuse. The two questions are complementary, and combining multi-analyst fairness with residual crediting is an interesting open direction.

\begin{theorem}[Avoidable expenditure under non-crediting allocators]
\label{thm:avoidable-expenditure}
Fix a horizon \(K\), an exact-overlap fraction
\(\alpha_{\mathrm{ov}}\in[0,1]\), and average target-implied support
\(\bar{\rho}^{\min}\). There exists a non-adaptive bounded-sensitivity
workload with empirical exact-overlap \(\alpha_{\mathrm{ov}}\) on which any
allocator \(\pi^{\mathrm{SA}}\) that does not credit historical support, that is, whose new-budget map is independent of \(\rho_t^{\mathrm{hist}}\), incurs
new answering expenditure at least
\begin{equation}
\sum_{t=1}^{K} \rho^{\mathrm{new}}_{t,\pi^{\mathrm{SA}}}
\geq
\sum_{t=1}^{K} \rho^{\mathrm{new}}_{t,\pi^\star}
+
\Omega\!\left(\alpha_{\mathrm{ov}}K\bar{\rho}^{\min}\right).
\end{equation}
The same bound holds for any accuracy-aware non-crediting allocator \(\pi^{\mathrm{AC}}\) (Definition~\ref{def:ac}) and for any budget-partitioning allocator \(\pi^{\mathrm{BP}}\) (Definition~\ref{def:bp}) when cross-analyst query overlap is present, since both classes are functionally independent of \(\rho_t^{\mathrm{hist}}\).
\end{theorem}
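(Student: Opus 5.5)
The plan is to build a single explicit exact-repeat workload and compare, step by step, what any non-crediting allocator must spend against what the residual allocator \(\pi^\star\) spends. Put \(m=\lceil\alpha_{\mathrm{ov}}K\rceil\) and \(K-m\) for the distinct queries. The workload has \(K-m\) pairwise distinct bounded-sensitivity scalar queries with thresholds whose from-scratch targets equal \(\bar\rho^{\min}\). Add to these \(m\) exact repeats, each an exact public-descriptor copy of some earlier query with the same threshold \(U_t\) and confidence \(\gamma_t\). (If \(m=K\), take the first query as a seed and count it among the repeats; this changes only constants.) The sequence is fixed in advance, so it is non-adaptive. Its empirical exact-overlap is \(\alpha_{\mathrm{ov}}\) up to rounding. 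The target-implied supports average \(\bar\rho^{\min}\). Choose \(\rho_{\mathrm{ans}}\geq K\bar\rho^{\min}\), so that every step lies in the feasible regime for both allocators. Because the accounting is conditional on the realised diagnostic transcript, as in Theorem~\ref{thm:minimality}, I would fix that transcript and treat \(\rhomin\) as deterministic, equal to \(\bar\rho^{\min}\), at every step.

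\textbf{Cost of \(\pi^\star\).} At a first occurrence, \(\rhohist=0\), so \(\pi^\star\) spends \(\bar\rho^{\min}\). At a repeat, the scalar Gaussian exact-reuse identity \(\rho^{\mathrm{eff}}=\rhohist+\rhonew\) gives \(\rhohist\geq\bar\rho^{\min}\), so \(\rhonewmin=0\). Theorem~\ref{thm:minimality} then says \(\pi^\star\) spends exactly \(0\). Hence \(\sum_t\rho^{\mathrm{new}}_{t,\pi^\star}=(K-m)\bar\rho^{\min}\).

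\textbf{Lower bound for non-crediting allocators.} The key step is that the new-budget map of \(\pi^{\mathrm{SA}}\) is functionally independent of \(\rhohist\). So at a repeat step it receives the same inputs \((q_t,U_t,\gamma_t,\rhoavail)\) and diagnostic as in a counterfactual where \(\rhohist=0\). In that counterfactual the from-scratch target is feasible, and the threshold-pricing property (Definition~\ref{def:sa}) forces it to allocate at least \(\rhomin=\bar\rho^{\min}\). Therefore it also allocates at least \(\bar\rho^{\min}\) at every actual repeat. Summing over the repeats gives
\[
\sum_t\rho^{\mathrm{new}}_{t,\pi^{\mathrm{SA}}}\geq K\bar\rho^{\min}=\sum_t\rho^{\mathrm{new}}_{t,\pi^\star}+m\bar\rho^{\min},
\]
and \(m\bar\rho^{\min}\geq\alpha_{\mathrm{ov}}K\bar\rho^{\min}\). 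The same argument covers \(\pi^{\mathrm{AC}}\) verbatim, since it is also independent of \(\rhohist\) and answers from scratch. For \(\pi^{\mathrm{BP}}\), I would assign each repeat to a different analyst from its original, so that the overlap is cross-analyst. I would also size the sub-budgets \(\rho^{(a)}\) large enough that every query is feasible within its analyst's share. Each analyst then answers from scratch and pays \(\bar\rho^{\min}\) per repeat.

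\textbf{Main obstacle.} The delicate step is making the independence argument rigorous given that \(\rhoavail\) differs between the real run and the counterfactual. I would handle this by taking the budget large enough that feasibility holds along every trajectory. A second issue is that the diagnostic is random and, for \(\pi^{\mathrm{AC}}\), query-specific. I would hold the realised diagnostic transcript fixed and make the diagnostic cost identical for both allocators, so that only the answering expenditure differs. This also requires stating that "threshold-pricing" holds for the accuracy-aware class; I would either build it into Definition~\ref{def:ac} or note that an allocator that does not certify the threshold from scratch fails the target at first occurrences, and so cannot be compared on equal utility terms.
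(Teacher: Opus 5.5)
Your proposal is correct and is essentially the paper's exact-repeat witness argument. Release a query once and place exact repeats later. At each repeat \(\pi^\star\) spends zero by residual minimality, while any allocator independent of \(\rho_t^{\mathrm{hist}}\) pays the from-scratch \(\rho_t^{\min}\); summing the \(m\) gaps gives the bound.

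The extra points you handle are left implicit in the paper's sketch: a budget large enough that every step is feasible, a fixed diagnostic transcript so repeats share the same \(\rho_t^{\min}\), the threshold-pricing property needed for the class in Definition~\ref{def:ac}, and cross-analyst placement for Definition~\ref{def:bp}. These make the argument more complete rather than different.
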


\begin{proposition}[Per-query budget separation from APEx-style allocators]
\label{prop:apex-separation}
Let \(\mathcal{R}=\{t : q_t=q_j \text{ for some } j<t,\;\rho_t^{\mathrm{hist}}\geq\rho_t^{\min}\}\) be the set of fully supported exact repeats. Then
\[
\sum_{t=1}^{K}\rho_{t,\pi^{\mathrm{AC}}}^{\mathrm{new}}
-
\sum_{t=1}^{K}\rho_{t,\pi^\star}^{\mathrm{new}}
\geq
\sum_{t\in\mathcal{R}}\rho_t^{\min}.
\]
At each \(t\in\mathcal{R}\), the residual allocator spends zero while \(\pi^{\mathrm{AC}}\) pays full from-scratch cost.
\end{proposition}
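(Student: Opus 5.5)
The plan is a per-step comparison: split the difference of the two expenditure sums into a term for each \(t\), show every term is nonnegative, and show that each term with \(t\in\mathcal{R}\) equals \(\rhomin\). Both allocators are evaluated on the same workload, conditional on the same realised diagnostic transcript. That makes \(\rhomin\) the same object for both, as in Theorem~\ref{thm:minimality}. I also assume, as the per-query claim implicitly does, that the from-scratch target of \(\pi^{\mathrm{AC}}\) is feasible for its remaining budget at each \(t\). So, by the threshold-pricing property it shares with Definition~\ref{def:sa}, \(\pi^{\mathrm{AC}}\) allocates at least \(\rhomin\) at every step.

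\textbf{Steps in \(\mathcal{R}\).} Take \(t\in\mathcal{R}\). Since \(\rhohist\geq\rhomin\), the residual \eqref{eq:residual} is \(\rhonewmin=\max\{0,\rhomin-\rhohist\}=0\). This residual is feasible for any \(\rhoavail\geq 0\), so Theorem~\ref{thm:minimality} gives \(\rho^{\mathrm{new}}_{t,\pi^\star}=0\). Scarcity is never invoked, because \(0\leq\rhoavail\). By Definition~\ref{def:ac}, \(\pi^{\mathrm{AC}}\) computes a from-scratch allocation that is functionally independent of \(\rhohist\). To satisfy \(r_t\leq U_t\) from scratch it must spend at least \(\rhomin\), by the definition of \(\rhomin\) as an infimum together with strict monotonicity of \(r_t\). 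So the term at step \(t\) is at least \(\rhomin\).

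\textbf{Steps outside \(\mathcal{R}\).} Here I need \(\rho^{\mathrm{new}}_{t,\pi^{\mathrm{AC}}}\geq \rho^{\mathrm{new}}_{t,\pi^\star}\). The residual allocator never spends more than \(\rhonewmin=\max\{0,\rhomin-\rhohist\}\leq\rhomin\). In the feasible regime this is by Theorem~\ref{thm:minimality}. In scarcity it follows from Proposition~\ref{thm:scarcity}, since the projection gives \(\rho^\star\leq\rhoavail<\rhonewmin\). Meanwhile \(\pi^{\mathrm{AC}}\) spends at least \(\rhomin\) by the threshold-pricing property. So each such term is nonnegative, and summing over \(t\) gives the stated inequality.

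\textbf{Main obstacle.} The difficulty is that the budget paths diverge. Because \(\pi^{\mathrm{AC}}\) overspends early, its \(\rhoavail\) later may be too small to pay full from-scratch cost. Then its spending could fall below \(\rhomin\), and possibly below what \(\pi^\star\) spends. My first attempt is to state the claim under a nonbinding-budget condition: \(\rho_{\mathrm{ans}}\) is large enough that \(\pi^{\mathrm{AC}}\) stays in its feasible regime for all \(t\leq K\), which is the reading of "pays full from-scratch cost" in the statement. If that condition fails, the fallback is to restrict the sums to the prefix of steps on which \(\pi^{\mathrm{AC}}\) remains feasible and apply the same argument there. The per-step bounds remain valid on that prefix.
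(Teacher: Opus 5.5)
Your proposal is correct and follows the paper's own argument: on \(\mathcal{R}\) the residual allocator spends zero, while \(\pi^{\mathrm{AC}}\), being independent of \(\rhohist\), pays at least \(\rhomin\), and one sums. Your explicit nonnegativity check for steps outside \(\mathcal{R}\) and your nonbinding-budget assumption for \(\pi^{\mathrm{AC}}\) make precise what the paper's sketch leaves implicit when it simply ``sums over \(\mathcal{R}\)'', and that assumption is genuinely needed: with \(\rho_{\mathrm{ans}}=1\), an exact-repeat pair with \(\rhomin=1/2\) followed by a distinct query with \(\rhomin=1/2\) gives both allocators total spend \(1\), so the difference is \(0<1/2\).
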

\begin{proof}[Proof sketch]
At each \(t\in\mathcal{R}\), \(\rho_t^{\mathrm{hist}}\geq\rho_t^{\min}\) so the residual allocator sets \(\rho_{t,\pi^\star}^{\mathrm{new}}=0\). Since \(\pi^{\mathrm{AC}}\) is independent of \(\rho_t^{\mathrm{hist}}\), it allocates at least \(\rho_t^{\min}\) from scratch. Summing over \(\mathcal{R}\) gives the bound.
\end{proof}

\begin{proposition}[Cross-analyst reuse gap]
\label{prop:pujol-separation}
Consider a single-dataset setting with \(A\geq 2\) analysts. Suppose analyst \(a_1\) submits query \(q\) at time \(t_1\), answered at support \(\rho_1\). Later, analyst \(a_2\) submits the same query \(q\) at time \(t_2>t_1\) with accuracy threshold requiring \(\rho^{\min}\leq\rho_1\). Under any budget-partitioning allocator \(\pi^{\mathrm{BP}}\),
\(\rho_{t_2,\pi^{\mathrm{BP}}}^{\mathrm{new}}\geq\rho^{\min}\),
while under the residual allocator
\(\rho_{t_2,\pi^\star}^{\mathrm{new}}=0\).
\end{proposition}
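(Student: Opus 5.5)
The plan is to handle the two allocators separately at the single step \(t_2\). In each case we compute \(\rho^{\mathrm{new}}_{t_2}\) directly from the relevant definitions and from Theorem~\ref{thm:minimality}. First, fix the transcript. At time \(t_1\), analyst \(a_1\)'s query \(q\) is released at support \(\rho_1\), so \(H_{t_2-1}\) contains the entry \((q,\tilde q,\rho_1)\). At time \(t_2\) the descriptor of \(q\) equals this entry exactly. The lookup phase therefore triggers exact reuse, and \(\rho_{t_2}^{\mathrm{hist}}\geq\rho_1\). It equals \(\rho_1\) if this is the only prior release of \(q\); by inverse-variance fusion it is at least \(\rho_1\) if there are several.

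\emph{Residual allocator.} Combining \(\rho_{t_2}^{\mathrm{hist}}\geq\rho_1\) with the hypothesis \(\rho^{\min}\leq\rho_1\), equation~\eqref{eq:residual} gives \(\rho^{\mathrm{new,min}}_{t_2}=\max\{0,\rho^{\min}-\rho^{\mathrm{hist}}_{t_2}\}=0\). Since \(0\leq\rho^{\mathrm{avail}}_{t_2}\) always holds, we are in the feasible regime. The controller therefore sets \(\rho^{\mathrm{new}}_{t_2,\pi^\star}=0\), and this is the unique minimal allocation by Theorem~\ref{thm:minimality}. Accuracy is met because the fused effective support is \(\rho^{\mathrm{hist}}_{t_2}+0\geq\rho^{\min}\), so \(r_{t_2}\) at the effective support is at most \(U_{t_2}\).

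\emph{Budget-partitioning allocator.} By Definition~\ref{def:bp}, analyst \(a_2\)'s query is answered from scratch within the sub-budget \(\rho^{(a_2)}\), with no credit taken from \(H_{t_2-1}\). Its effective support is therefore exactly \(\rho^{\mathrm{new}}_{t_2,\pi^{\mathrm{BP}}}\). For the threshold \(U_{t_2}\) to be certified we need \(r_{t_2}(\rho^{\mathrm{new}})\leq U_{t_2}\). By the definition of \(\rho^{\min}\) as an infimum, together with strict monotonicity of \(r_{t_2}\) (the same argument as in Theorem~\ref{thm:minimality}, with \(\rho^{\mathrm{hist}}\) replaced by \(0\)), this forces \(\rho^{\mathrm{new}}_{t_2,\pi^{\mathrm{BP}}}\geq\rho^{\min}\).

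The main obstacle is the case where \(\pi^{\mathrm{BP}}\) does not satisfy the threshold at all, for example because \(\rho^{(a_2)}<\rho^{\min}\). In that case the lower bound on spending fails as literally stated. I would handle this by making explicit the implicit assumption, also used in Proposition~\ref{prop:apex-separation}, that the compared allocator is threshold-pricing in the sense of Definition~\ref{def:sa}: whenever the from-scratch target fits the analyst's sub-budget, it allocates enough support to meet it. Under that assumption the inequality above holds. Otherwise the statement is read as a comparison between threshold-satisfying allocations, and the gap then appears in accuracy rather than in expenditure.
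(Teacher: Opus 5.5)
Your argument is correct and matches the paper's sketch. Exact-descriptor reuse puts $(q,\tilde q_{t_1},\rho_1)$ in the transcript, so $\rho_{t_2}^{\mathrm{hist}}\geq\rho_1\geq\rho^{\min}$ forces a zero residual, while $\pi^{\mathrm{BP}}$ prices $q$ from scratch. The threshold-satisfaction assumption you state explicitly for $\pi^{\mathrm{BP}}$ is the same one the paper's phrase ``priced from scratch within $a_2$'s sub-budget'' assumes without saying, so flagging it sharpens the paper's proof rather than departing from it.
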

\begin{proof}[Proof sketch]
Under \(\pi^{\mathrm{BP}}\), analyst \(a_2\)'s query is priced from scratch within \(a_2\)'s sub-budget. Under \(\pi^\star\), the public transcript contains \((q,\tilde q_{t_1},\rho_1)\). Since \(\rho_{t_2}^{\mathrm{hist}}=\rho_1\geq\rho^{\min}\), the residual is zero. The crediting step is post-processing by Theorem~\ref{thm:postprocessing}, so no additional privacy is consumed.
\end{proof}

\begin{theorem}[Per-query comparison]
\label{thm:domination}
Let $\pi^\star$ be the residual allocator and let $\pi^{\mathrm{SA}}$ be state-agnostic threshold-pricing. Whenever the from-scratch threshold is feasible for $\pi^{\mathrm{SA}}$,
$
\rho^{\mathrm{new}}_{t,\pi^\star}\leq \rho^{\mathrm{new}}_{t,\pi^{\mathrm{SA}}}
$,
with strict inequality in the feasible exact-reuse regime whenever $\rhohist>0$. Moreover,
\begin{equation}
\label{eq:ratio}
\frac{\rho^{\mathrm{new}}_{t,\pi^\star}}{\rho^{\mathrm{new}}_{t,\pi^{\mathrm{SA}}}}
\leq\max\left\{0,1-\frac{\rhohist}{\rhomin}\right\}.
\end{equation}
\end{theorem}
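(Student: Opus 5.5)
The plan is to pin down both allocations explicitly in the feasible regime and then compare them directly. Fix \(t\) with the from-scratch threshold feasible for \(\pi^{\mathrm{SA}}\), meaning \(\rhomin\leq\rhoavail\). We condition on the realised diagnostic transcript, as in Theorem~\ref{thm:minimality}, so that \(\rhomin\) is a fixed number.

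\textbf{Lower bound for \(\pi^{\mathrm{SA}}\).} By Definition~\ref{def:sa}, \(\pi^{\mathrm{SA}}\) allocates enough support to satisfy the threshold from scratch. Because \(\pi^{\mathrm{SA}}\) does not fuse history, its effective support equals its new allocation. The definition of \(\rhomin\) as an infimum, together with strict monotonicity of \(r_t\), then gives
\[
\rho^{\mathrm{new}}_{t,\pi^{\mathrm{SA}}}\geq\rhomin .
\]

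\textbf{Value of \(\pi^\star\).} Since \(\rhohist\geq0\), we have \(\rhonewmin=\max\{0,\rhomin-\rhohist\}\leq\rhomin\leq\rhoavail\). The residual allocator is therefore in its feasible branch, and Theorem~\ref{thm:minimality} gives
\[
\rho^{\mathrm{new}}_{t,\pi^\star}=\max\{0,\rhomin-\rhohist\}.
\]
This settles a point one might otherwise worry about: feasibility of the from-scratch target for \(\pi^{\mathrm{SA}}\) forces feasibility of the residual target, so \(\pi^\star\) never enters the scarcity rule of Proposition~\ref{thm:scarcity} here.

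\textbf{Weak inequality.} Chaining the two facts gives
\[
\rho^{\mathrm{new}}_{t,\pi^\star}=\max\{0,\rhomin-\rhohist\}\leq\rhomin\leq\rho^{\mathrm{new}}_{t,\pi^{\mathrm{SA}}}.
\]

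\textbf{Strict inequality.} Now assume \(\rhohist>0\) in the exact-reuse regime. If \(\rhohist\geq\rhomin\), then \(\rho^{\mathrm{new}}_{t,\pi^\star}=0\). Otherwise \(\rho^{\mathrm{new}}_{t,\pi^\star}=\rhomin-\rhohist<\rhomin\).

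The main obstacle is a degenerate case. If \(\rhomin=0\), both allocators may spend \(0\), and strictness fails. I would therefore make explicit the nondegeneracy assumption \(\rhomin>0\), which is implicit in the ratio \eqref{eq:ratio} anyway, since that ratio divides by \(\rhomin\). With \(\rhomin>0\) we get \(\rho^{\mathrm{new}}_{t,\pi^\star}<\rhomin\) in both subcases, and hence \(\rho^{\mathrm{new}}_{t,\pi^\star}<\rho^{\mathrm{new}}_{t,\pi^{\mathrm{SA}}}\).

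\textbf{Ratio bound.} Since \(\rho^{\mathrm{new}}_{t,\pi^{\mathrm{SA}}}\geq\rhomin>0\), the ratio is well defined, and
\[
\frac{\rho^{\mathrm{new}}_{t,\pi^\star}}{\rho^{\mathrm{new}}_{t,\pi^{\mathrm{SA}}}}
\leq\frac{\max\{0,\rhomin-\rhohist\}}{\rhomin}
=\max\left\{0,1-\frac{\rhohist}{\rhomin}\right\}.
\]
The inequality holds because the numerator is nonnegative and the denominator is at least \(\rhomin\). This is \eqref{eq:ratio}. Equality holds exactly when \(\pi^{\mathrm{SA}}\) spends the minimal from-scratch amount \(\rhomin\).
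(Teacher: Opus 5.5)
Your argument is correct and is essentially the paper's own proof: $\pi^{\mathrm{SA}}$ must pay at least the from-scratch cost $\rhomin$, the residual allocator pays $\max\{0,\rhomin-\rhohist\}$, and the strict inequality and the ratio follow by comparing and dividing. Your added checks, that feasibility transfers to $\pi^\star$ and that strictness and the ratio need $\rhomin>0$, make explicit what the paper leaves implicit. One small slip in your closing aside: equality in \eqref{eq:ratio} also holds whenever $\rhohist\geq\rhomin$, since both sides are then $0$, so ``exactly when'' should read ``when''.
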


\subsection{Competitive analysis against the offline optimum}

We now compare the residual allocator not against weaker non-crediting baselines, but against the strongest possible comparator: an offline optimum that knows the full workload in advance.

\begin{definition}[Offline optimum and cost measure]
\label{def:offline}
The offline optimum \(\mathrm{OPT}\) knows the full non-adaptive workload \(\mathcal{W}=(q_1,\ldots,q_K)\) in advance. It can plan support allocations to exploit future repeats, and requires no diagnostic budget. The cost measure is total new answering expenditure:
\(\Cans(\pi,\mathcal{W})=\sum_{t=1}^{K}\rho_{t,\pi}^{\mathrm{new}}\).
\end{definition}

\begin{theorem}[1-competitiveness of the residual allocator]
\label{thm:competitive}
Consider the scalar Gaussian exact-reuse setting with additive support. Let \(\mathcal{W}=(q_1,\ldots,q_K)\) be a non-adaptive workload in which all thresholds are feasible for the residual allocator \(\pi^\star\). Then
\[
\Cans(\pi^\star,\mathcal{W})=\Cans(\mathrm{OPT},\mathcal{W}).
\]
The residual allocator is 1-competitive in answering expenditure: the online algorithm matches the offline optimum despite having no knowledge of future queries.
\end{theorem}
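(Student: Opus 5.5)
The plan is to prove the two inequalities \(\Cans(\pi^\star,\mathcal{W})\le\Cans(\mathrm{OPT},\mathcal{W})\) and \(\Cans(\mathrm{OPT},\mathcal{W})\le\Cans(\pi^\star,\mathcal{W})\). The second is immediate, because \(\pi^\star\) is itself a feasible offline policy: it satisfies every threshold under the feasibility hypothesis, by Theorem~\ref{thm:minimality}, and an offline planner may simply simulate it. The substance is therefore the lower bound on OPT. I would first reduce the workload to its distinct public descriptors. Exact reuse is triggered only by descriptor equality, and support is additive by inverse-variance fusion. So for each distinct descriptor \(d\), with occurrence times \(t_1<\dots<t_{n_d}\), the effective support available at occurrence \(t_k\) under any policy is exactly the cumulative new support \(\sum_{j\le k}\rho^{\mathrm{new}}_{t_j}\) spent on \(d\) so far. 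This holds because no cross-descriptor support exists in the main model. Both the cost and the constraints then decompose over descriptors, and it suffices to treat one descriptor at a time.

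For a fixed descriptor \(d\), any threshold-satisfying policy (online or offline) must have \(\sum_{j\le k}\rho^{\mathrm{new}}_{t_j}\ge \rho^{\min}_{t_k}\) for every \(k\). This is the strict-monotonicity argument of Theorem~\ref{thm:minimality}: support below \(\rho^{\min}_{t_k}\) cannot certify the target. Taking \(k\) to be the occurrence achieving the maximum requirement gives a lower bound on OPT's total spend for \(d\), namely \(\sum_{j}\rho^{\mathrm{new}}_{t_j}\ge \max_k\rho^{\min}_{t_k}\). Next I would verify by induction on \(k\) that \(\pi^\star\) attains exactly this. Since \(\rho^{\mathrm{hist}}_{t_k}=\sum_{j<k}\rho^{\mathrm{new}}_{t_j}\), the residual rule~\eqref{eq:residual} gives a cumulative support after step \(k\) of \(\max\{\rho^{\mathrm{hist}}_{t_k},\rho^{\min}_{t_k}\}\). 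The induction hypothesis then shows that this cumulative support equals \(\max_{j\le k}\rho^{\min}_{t_j}\) (a running maximum). Summing over \(k\), \(\pi^\star\)'s spend on \(d\) telescopes to \(\max_k\rho^{\min}_{t_k}\). Summing over descriptors gives \(\Cans(\pi^\star)\le\Cans(\mathrm{OPT})\), and hence equality.

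The main obstacle is making the model of OPT precise enough that the lower bound is honest. There are three issues. OPT ``requires no diagnostic budget,'' so I must fix \(\rho^{\min}_t\) to be the same realised values for both policies; I would condition on the realised diagnostic transcript, as in Theorem~\ref{thm:minimality}. I must rule out OPT gaining from planning ahead, for example by over-allocating early. The running-maximum lower bound handles this, since it constrains only the total, and front-loading cannot reduce the total below the maximum requirement. Finally, I must argue that OPT cannot obtain support for \(d\) from releases of other descriptors. I would state explicitly that, in the scalar Gaussian exact-reuse setting, support for \(q_t\) comes only from releases with identical descriptor, which is the paper's standing assumption. With these conventions fixed, the remaining steps are routine.
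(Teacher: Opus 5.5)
Your proposal is correct and follows essentially the same route as the paper: decompose the workload by query identity, lower-bound any threshold-satisfying allocator by \(\sum_{q}\max_{t\in T(q)}\rho_t^{\min}\), and show by induction that the residual allocator's accumulated support equals the running maximum of the requirements. The only difference is cosmetic: you obtain \(\Cans(\mathrm{OPT},\mathcal{W})\le\Cans(\pi^\star,\mathcal{W})\) by letting the offline planner simulate \(\pi^\star\), whereas the paper exhibits a front-loading plan; your version has the small advantage of not needing to check that front-loading respects per-step budget availability.
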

\begin{proof}
Partition the positions \(\{1,\ldots,K\}\) into equivalence classes by query identity. For each distinct query \(q\) appearing in \(\mathcal{W}\), let \(T(q)=\{t:q_t=q\}\) and define
\(\rho^{\max}(q)=\max_{t\in T(q)}\rho_t^{\min}\).

\emph{Offline lower bound.} Any allocator must accumulate total support at least \(\rho^{\max}(q)\) for each distinct query \(q\): the occurrence with the largest \(\rho_t^{\min}\) demands at least that much effective support. Conversely, \(\mathrm{OPT}\) achieves this by front-loading \(\rho^{\max}(q)\) at the first occurrence. Therefore
\(\Cans(\mathrm{OPT},\mathcal{W})=\sum_{q\;\mathrm{distinct}}\rho^{\max}(q)\).

\emph{Online residual allocator.} Fix a distinct query \(q\) with ordered occurrences \(t_1<\cdots<t_n\). Let \(A_k=\sum_{j=1}^{k}\rho_{t_j,\pi^\star}^{\mathrm{new}}\) denote accumulated support, with \(A_0=0\). By the residual rule, \(\rho_{t_k,\pi^\star}^{\mathrm{new}}=\max\{0,\rho_{t_k}^{\min}-A_{k-1}\}\).

We show \(A_k=\max_{1\leq j\leq k}\rho_{t_j}^{\min}\) by induction. Base: \(A_1=\rho_{t_1}^{\min}\). Inductive step: if \(\rho_{t_k}^{\min}\leq A_{k-1}\), then \(\rho_{t_k,\pi^\star}^{\mathrm{new}}=0\) and \(A_k=A_{k-1}=\max_{j\leq k}\rho_{t_j}^{\min}\). If \(\rho_{t_k}^{\min}>A_{k-1}\), then \(A_k=A_{k-1}+(\rho_{t_k}^{\min}-A_{k-1})=\rho_{t_k}^{\min}=\max_{j\leq k}\rho_{t_j}^{\min}\). In both cases \(A_k=\max_{j\leq k}\rho_{t_j}^{\min}\).

Therefore \(A_n=\rho^{\max}(q)\), and summing over all distinct queries gives \(\Cans(\pi^\star,\mathcal{W})=\sum_{q\;\mathrm{distinct}}\rho^{\max}(q)=\Cans(\mathrm{OPT},\mathcal{W})\).
\end{proof}

\begin{remark}[Structural role of additive support]
\label{rem:additive}
The 1-competitiveness result is a structural consequence of support additivity (equation (4)). Because \(\rho_t^{\mathrm{eff}}=\rho_t^{\mathrm{hist}}+\rho_t^{\mathrm{new}}\), the residual allocator incrementally accumulates support and reaches the same total as an offline planner that front-loads in a single release. The inductive proof shows that accumulated support tracks the running maximum of threshold requirements---the same quantity the offline planner targets. For non-additive support models (e.g., correlated workloads with sublinear fusion), the competitive ratio could exceed 1.
\end{remark}

\begin{proposition}[Diagnostic overhead]
\label{prop:diagnostic-overhead}
Let \(\xi=\rho_{\mathrm{est}}/\rho_{\mathrm{tot}}\) be the diagnostic budget fraction. In the feasible regime, the full mechanism's total privacy expenditure satisfies
\[
\frac{\rho_{\mathrm{est}}+\Cans(\pi^\star,\mathcal{W})}{\Cans(\mathrm{OPT},\mathcal{W})}
=
1+\frac{\rho_{\mathrm{est}}}{\Cans(\mathrm{OPT},\mathcal{W})}.
\]
If the workload consumes the full answering budget, this ratio is at most \(1/(1-\xi)\). The diagnostic budget fraction \(\xi\) is the sole source of overhead against the offline optimum.
\end{proposition}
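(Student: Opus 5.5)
The plan is to obtain the identity by direct substitution of the two accounting totals. Then the bound $1/(1-\xi)$ follows from lower-bounding $\Cans(\mathrm{OPT},\mathcal{W})$ by the answering budget that is actually consumed.

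\emph{Identity.} In the feasible regime, every threshold is feasible for $\pi^\star$, so Theorem~\ref{thm:competitive} applies and gives $\Cans(\pi^\star,\mathcal{W})=\Cans(\mathrm{OPT},\mathcal{W})$. The full mechanism's expenditure is the diagnostic charge plus the answering charge. By Theorem~\ref{thm:zcdp}, the diagnostic charge is $\sum_t\rho_{\mathrm{probe},t}$, and we account for it conservatively as the full reserved $\rho_{\mathrm{est}}$. OPT pays no diagnostic cost by Definition~\ref{def:offline}. Substituting into the numerator and dividing by $\Cans(\mathrm{OPT},\mathcal{W})>0$ gives $1+\rho_{\mathrm{est}}/\Cans(\mathrm{OPT},\mathcal{W})$. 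I will state explicitly that $\Cans(\mathrm{OPT},\mathcal{W})>0$, i.e.\ some query needs positive support, so that the ratio is defined.

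\emph{Bound.} I read ``consumes the full answering budget'' as $\Cans(\pi^\star,\mathcal{W})=\rho_{\mathrm{ans}}=\rho_{\mathrm{tot}}-\rho_{\mathrm{est}}=(1-\xi)\rho_{\mathrm{tot}}$. Applying Theorem~\ref{thm:competitive} again gives $\Cans(\mathrm{OPT},\mathcal{W})=(1-\xi)\rho_{\mathrm{tot}}$. The ratio therefore becomes
\[
1+\frac{\xi\rho_{\mathrm{tot}}}{(1-\xi)\rho_{\mathrm{tot}}}=\frac{1}{1-\xi}.
\]
If instead the hypothesis only says $\Cans\geq\rho_{\mathrm{ans}}$ up to exhaustion, monotonicity in the denominator still gives ``at most''. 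In that case I will write the step as an inequality so that either reading is covered. Finally, since the answering terms cancel exactly by 1-competitiveness, the only residual term is the $\rho_{\mathrm{est}}$ summand. This justifies the claim that $\xi$ is the sole source of overhead.

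\emph{Main obstacle.} The step most likely to cause trouble is not algebraic but interpretive. It is being precise that the ``total expenditure'' counts the full $\rho_{\mathrm{est}}$ rather than the realised $\sum_t\rho_{\mathrm{probe},t}$. If it were the realised sum, the statement would become an inequality ($\le$) rather than an equality. I will adopt the contract-level reading, in which the estimation budget is reserved up front, so that the displayed equality holds as stated.
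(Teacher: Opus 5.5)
Your proposal is correct and follows the paper's proof essentially step for step. You invoke Theorem~\ref{thm:competitive} to replace $\Cans(\pi^\star,\mathcal{W})$ by $\Cans(\mathrm{OPT},\mathcal{W})$, divide, and then use $\Cans(\mathrm{OPT},\mathcal{W})\geq\rho_{\mathrm{ans}}=(1-\xi)\rho_{\mathrm{tot}}$ to obtain $1/(1-\xi)$. Your explicit condition $\Cans(\mathrm{OPT},\mathcal{W})>0$ is a side condition the paper leaves implicit, and your worry about realised probe cost versus $\rho_{\mathrm{est}}$ is moot for the displayed equality, since its left-hand side is already written with $\rho_{\mathrm{est}}$.
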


\subsection{Scarcity impossibility}
In the scarcity regime, the performance measure shifts to threshold satisfaction: \(\Nsat(\pi,\mathcal{W})=|\{t:\rho_t^{\mathrm{eff}}\geq\rho_t^{\min}\}|\). Here, foresight genuinely helps.

\begin{theorem}[Scarcity impossibility for deterministic online allocators]
\label{thm:scarcity-impossibility}
For any deterministic online allocator \(\pi\) (with or without history crediting) and any integer \(n\geq 2\), there exist two non-adaptive workloads \(\mathcal{W}_A,\mathcal{W}_B\) of distinct queries under budget \(\rho_{\mathrm{ans}}\) such that
\[
\min\!\left\{
\frac{\Nsat(\pi,\mathcal{W}_A)}{\Nsat(\mathrm{OPT},\mathcal{W}_A)},\;
\frac{\Nsat(\pi,\mathcal{W}_B)}{\Nsat(\mathrm{OPT},\mathcal{W}_B)}
\right\}
\leq\frac{1}{n}.
\]
No deterministic online allocator can achieve worst-case competitive ratio greater than \(1/n\) in threshold satisfaction.
\end{theorem}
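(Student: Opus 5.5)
We prove Theorem~\ref{thm:scarcity-impossibility} with a standard adversarial ``cheap-later'' construction. The two workloads share a prefix and diverge only afterwards. Because \(\pi\) is deterministic and online, its behaviour on the shared prefix is fixed before the divergence is revealed.

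Fix \(n\geq 2\) and budget \(\rho_{\mathrm{ans}}=\rho\). All queries are distinct, so \(\rhohist=0\) throughout and history crediting plays no role; this is why the bound holds ``with or without history crediting''. We work with threshold-implied supports \(\rhomin\) directly, and each can be realised by choosing the threshold \(U_t\) for a bounded-sensitivity Gaussian query, since \(r_t\) is strictly decreasing. The workloads are:
\begin{itemize}
\item \(\mathcal{W}_A\): a single query \(q_1\) with \(\rho_1^{\min}=\rho\).
\item \(\mathcal{W}_B\): the same \(q_1\), followed by \(n\) distinct queries \(q_2,\ldots,q_{n+1}\), each with \(\rho_j^{\min}=\rho/n\).
\end{itemize}
The offline optimum gives \(\Nsat(\mathrm{OPT},\mathcal{W}_A)=1\) by spending \(\rho\) on \(q_1\). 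On \(\mathcal{W}_B\) it skips \(q_1\) and spends \(\rho/n\) on each later query, so \(\Nsat(\mathrm{OPT},\mathcal{W}_B)=n\). This is the maximum, because satisfying \(q_1\) exhausts the budget and leaves count at most \(1\).

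The key step is determinism on the common prefix. At \(t=1\), \(\pi\) sees identical inputs under both workloads: the same query, threshold, available budget \(\rho\), and empty transcript. It therefore outputs the same \(\rho_1^{\mathrm{new}}=x\) in both.

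We then split into two cases.
\begin{itemize}
\item If \(x\geq\rho\), then \(q_1\) is satisfied but nothing remains. Under \(\mathcal{W}_B\) none of the later queries can be satisfied, so \(\Nsat(\pi,\mathcal{W}_B)=1\) and the ratio is \(1/n\).
\item If \(x<\rho\), then \(q_1\) is unsatisfied and \(\Nsat(\pi,\mathcal{W}_A)=0\), giving ratio \(0\leq 1/n\).
\end{itemize}
In both cases the minimum of the two ratios is at most \(1/n\), which is the claim.

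The obstacle is the diagnostic randomness. \(\rhomin\) is computed from a privatised diagnostic, so ``identical inputs'' at step \(1\) must be made precise. Following the convention of Theorem~\ref{thm:minimality}, we condition on the realised diagnostic transcript and treat \(\rhomin\) as given. The diagnostic released for \(q_1\) then has the same distribution in both workloads, and we couple it to be the same realisation. A deterministic \(\pi\) that maps (diagnostic, public state) to an allocation then gives the same \(x\) in both workloads.

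A secondary point is that \(\pi\) could respond to scarcity at later steps of \(\mathcal{W}_B\). This cannot help: once \(x\geq\rho\), the available budget is zero. We should also note that \(\mathcal{W}_A\) and \(\mathcal{W}_B\) have different lengths. If a common horizon \(K\) is required, we pad \(\mathcal{W}_A\) with \(n\) trivial queries of support \(\rhomin=\rho\cdot 2\). These are unsatisfiable by anyone once \(q_1\) is served, and the padding preserves both counts only if we instead make them exceed \(\rho\) so that even OPT cannot satisfy them.
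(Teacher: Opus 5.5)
Your proposal is correct and follows the paper's proof essentially verbatim: the same two workloads (one full-budget query, versus that query followed by \(n\) distinct queries of support \(\rho_{\mathrm{ans}}/n\)) and the same case split on the first-step allocation. Your conditioning on the realised diagnostic transcript makes explicit a determinism point the paper leaves implicit. The closing padding remark is unnecessary, since the theorem does not require a common horizon. It is also muddled as written, though padding queries with \(\rho^{\min}=2\rho\) already exceed the whole budget and so would leave both counts unchanged.
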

\begin{proof}[Proof sketch]
Set \(\rho_{\mathrm{ans}}=1\). \(\mathcal{W}_A\): one query with \(\rho^{\min}=1\). \(\mathcal{W}_B\): the same first query with \(\rho^{\min}=1\), followed by \(n\) distinct queries each with \(\rho^{\min}=1/n\). All queries are distinct, so no reuse is possible. If \(\pi\) answers \(q_1\), it exhausts the budget and satisfies \(1\) query on \(\mathcal{W}_B\), while \(\mathrm{OPT}\) skips \(q_1\) and satisfies \(n\), giving ratio \(1/n\). If \(\pi\) skips \(q_1\), it satisfies \(0\) on \(\mathcal{W}_A\) while \(\mathrm{OPT}\) satisfies \(1\), giving ratio \(0\). See Appendix~\ref{app:competitive-proofs} for the full construction.
\end{proof}

\begin{remark}[Phase transition and implications for QIF]
\label{rem:phase-transition}
Theorems~\ref{thm:competitive} and~\ref{thm:scarcity-impossibility} delineate a sharp phase transition. In the feasible regime, online allocation is free: the residual allocator matches the offline optimum (competitive ratio 1). In the scarcity regime, the problem retains the hardness of online selection (competitive ratio \(1/n\)). The structural reason is that feasibility eliminates the combinatorial ``which queries to serve'' decision. This contextualises the QIF scarcity layer: since no online allocator can guarantee a good competitive ratio under worst-case scarcity, the design question shifts from optimality to robustness. QIF provides one principled heuristic for matching limited support to query difficulty, but Theorem~\ref{thm:scarcity-impossibility} shows that no weight function can overcome the fundamental information asymmetry between online and offline allocators.
\end{remark}

\begin{definition}[Reuse-only Scarcity Comparator]
The reuse-only comparator credits $\rhohist$ and uses the same residual rule in the feasible regime, but sets $w_t\equiv1$ in scarcity. Comparing~\eqref{eq:scarcity} with QIF weights against this comparator isolates the contribution of the difficulty weighting rather than transcript reuse.
\end{definition}

\paragraph{Two-query illustration.}
Suppose query \(q_1\) is answered with support \(0.7\), and a later query
\(q_2\) exactly matches \(q_1\). If the threshold for \(q_2\) requires
\(\rho_2^{\min}=1.0\), then the released history already provides
\(\rho_2^{\mathrm{hist}}=0.7\). The residual support required for \(q_2\) is
\(
\rho_2^{\mathrm{new,min}}
=
\max\{0,1.0-0.7\}
=
0.3.
\)
Thus the mechanism spends only \(0.3\) new answering support rather than pricing
\(q_2\) from scratch. An APEx-style allocator (Definition~\ref{def:ac}) would spend \(1.0\); a budget-partitioning allocator (Definition~\ref{def:bp}) would spend \(1.0\) from analyst \(a_2\)'s sub-budget if \(q_1\) was submitted by a different analyst. The offline optimum (Definition~\ref{def:offline}) would also spend a total of \(1.0\) across both queries---matching \(\pi^\star\)'s total of \(0.7+0.3=1.0\), consistent with Theorem~\ref{thm:competitive}.

\paragraph{LLM disclosure.}
All technical content, theorem statements, assumptions, and proofs are the responsibility of the authors. Large language models and editing tools, including ChatGPT and Grammarly, were used to assist with text editing.

\appendix
\section*{Appendix}
\section{QueryModel Interface and Scope}
A QueryModel specifies: a sensitivity bound $\Delta$, a monotone error surrogate $r_t(\rho)$, a diagnostic procedure producing $m_t^{\lb}$ and $\rhomin$, and a fusion rule combining historical and new releases. The formal support-additivity theorem here is for scalar Gaussian exact reuse. Correlated histograms, range queries, or general linear workloads require a covariance-, Gram-, or workload-specific support object before residual subtraction is valid.

\begin{proposition}[Scalar Gaussian support additivity]
\label{prop:support-additivity}
Suppose $\tilde q_t^{\mathrm{hist}}$ and $\tilde q_t^{\mathrm{new}}$ are independent unbiased Gaussian estimates of the same scalar query, with positive supports $\rhohist>0$ and $\rhonew>0$, and variances $1/(2\rhohist)$ and $1/(2\rhonew)$. Then inverse-variance fusion gives
\begin{equation}
\tilde q_t^{\mathrm{fuse}}=\frac{\rhohist\tilde q_t^{\mathrm{hist}}+\rhonew\tilde q_t^{\mathrm{new}}}{\rhohist+\rhonew},
\qquad
\operatorname{Var}(\tilde q_t^{\mathrm{fuse}})=\frac{1}{2(\rhohist+\rhonew)}.
\end{equation}
Thus historical and new support are additive in the scalar Gaussian exact-reuse case.
\end{proposition}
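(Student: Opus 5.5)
The plan is to compute the mean and variance of the fused estimator directly, using only independence, unbiasedness, and the stated variances; Gaussianity is needed only to say the fused estimate is again Gaussian. Write \(\theta\) for the true scalar query value, and set \(\lambda=\rhohist/(\rhohist+\rhonew)\in(0,1)\), so that
\(\tilde q_t^{\mathrm{fuse}}=\lambda\tilde q_t^{\mathrm{hist}}+(1-\lambda)\tilde q_t^{\mathrm{new}}\).
This is a convex combination of two unbiased estimates, so linearity of expectation gives \(\mathbb{E}[\tilde q_t^{\mathrm{fuse}}]=\theta\). Because the two estimates are independent Gaussians, the fused estimate is also Gaussian and unbiased. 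Its law is therefore fixed by its variance.

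The variance follows from independence. There is no cross term, so
\[
\operatorname{Var}(\tilde q_t^{\mathrm{fuse}})
=\frac{(\rhohist)^2}{(\rhohist+\rhonew)^2}\cdot\frac{1}{2\rhohist}
+\frac{(\rhonew)^2}{(\rhohist+\rhonew)^2}\cdot\frac{1}{2\rhonew}
=\frac{\rhohist+\rhonew}{2(\rhohist+\rhonew)^2}
=\frac{1}{2(\rhohist+\rhonew)}.
\]
This has exactly the form \(1/(2\rho)\) with \(\rho=\rhohist+\rhonew\). Under the paper's convention that support \(\rho\) corresponds to variance \(1/(2\rho)\), the effective support is \(\rho_t^{\mathrm{eff}}=\rhohist+\rhonew\), which is the additivity claim used in the residual rule~\eqref{eq:residual}.

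To justify calling this "inverse-variance fusion", I would also check that the weights are the variance-minimising ones. The weights are \(\propto 1/\sigma^2\), that is \(\propto 2\rho\), which are exactly the supports. Among all unbiased combinations \(c\,\tilde q_t^{\mathrm{hist}}+(1-c)\,\tilde q_t^{\mathrm{new}}\), the variance \(c^2/(2\rhohist)+(1-c)^2/(2\rhonew)\) is a convex quadratic in \(c\). Setting its derivative to zero gives \(c=\lambda\). So the chosen weights are optimal, and no other linear unbiased fusion yields larger effective support.

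I expect no real obstacle here; the argument is an elementary calculation. The only point needing care is the modelling step: independence of the historical and new releases. This holds because the new release's noise is drawn fresh, independently of earlier outputs. The non-adaptive-workload assumption matters here, since it means the choice of \(\rhonew\) depends only on post-processed public state (Theorem~\ref{thm:postprocessing}). Strictly, the variance calculation above holds conditional on that state, with \(\rhonew\) treated as fixed.
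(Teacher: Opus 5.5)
Your proof is correct. It is the standard inverse-variance calculation the paper relies on; the paper states this proposition without a written proof, and your unbiasedness, variance, and weight-optimality checks supply all of its content.

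One caution about your closing aside. The proposition does not need it, since independence and fixed supports are hypotheses. Also, conditioning on a public state that contains $\tilde q_t^{\mathrm{hist}}$ would make it non-random. The relevant requirement is that $\rhonew$ be chosen independently of the historical noise, as happens when it is driven by the fresh diagnostic and accounted supports. That is a statistical property of the controller, not a consequence of the post-processing Theorem~\ref{thm:postprocessing}.
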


\subsection{Scalar Count Instantiation}
For unit-$\ell_2$-sensitivity scalar counts,
\begin{equation}
\tilde q_t=q_t(D)+\mathcal N\left(0,\frac{1}{2\rho_t}\right).
\end{equation}

Fusion refers to the post-processing step that combines reusable historical DP outputs with any newly released DP answer for the current query. In the scalar Gaussian exact-reuse case, if the historical and new answers estimate the same scalar query, inverse-variance fusion gives
\begin{equation}
\tilde q_t^{\mathrm{fuse}}
=
\frac{
\rho_t^{\mathrm{hist}}\tilde q_t^{\mathrm{hist}}
+
\rho_t^{\mathrm{new}}\tilde q_t^{\mathrm{new}}
}{
\rho_t^{\mathrm{hist}}+\rho_t^{\mathrm{new}}
},
\qquad
\operatorname{Var}(\tilde q_t^{\mathrm{fuse}})
=
\frac{1}{2(\rho_t^{\mathrm{hist}}+\rho_t^{\mathrm{new}})}.
\end{equation}
Thus fusion converts historical and new releases into query-level effective support, while incurring no additional privacy loss beyond the already accounted releases.

\subsection{Disjoint Histogram Bins as Scalar Counts}
For disjoint histograms, each bin can be treated componentwise as a scalar count with its own diagnostic, support surrogate, and fusion rule. This is not a correlated-histogram QueryModel. Correlated histograms require a non-scalar support object.

\section{Notation}
\label{app:notation}

\begin{center}
\small
\begin{tabular}{p{0.24\linewidth}p{0.70\linewidth}}
\hline
\textbf{Notation} & \textbf{Meaning} \\
\hline
\(D\) & Sensitive dataset held by the trusted curator. \\

\(q_t\) & Query arriving at time \(t\). \\

\(K\) & Finite analysis horizon, i.e., the number of query arrivals considered in the composition and comparison results. It is not a privacy parameter. \\

\(U_t\) & Declared accuracy threshold for query \(q_t\). \\

\(\gamma_t\) & Failure probability associated with the accuracy/confidence requirement for query \(q_t\). \\

\(1-\gamma_t\) & Confidence level for the accuracy requirement. \\

\(\rho_{\mathrm{tot}}\) & Global zCDP contract for the full mechanism. \\

\(\rho_{\mathrm{est}}\) & Portion of the zCDP contract allocated to private diagnostics. \\

\(\rho_{\mathrm{ans}}\) & Portion of the zCDP contract allocated to new query execution. \\

\(\rho_{\mathrm{probe},t}\) & zCDP cost of the diagnostic/probe release at time \(t\). \\

\(H_{t-1}\) & Public transcript of previous DP releases before query \(q_t\) arrives. \\

\(\tilde q_t\) & Noisy DP answer released for query \(q_t\). \\

\(\rho_t^{\mathrm{hist}}\) & Historical support already available for query \(q_t\) from previous DP releases. This support has already been paid for in the past. \\

\(\rho_t^{\mathrm{new}}\) & New answering support spent at time \(t\). This is the newly charged privacy cost for answering \(q_t\). \\

\(\rho_t^{\mathrm{eff}}\) & Query-level effective support used for accuracy after combining historical and new support. In scalar Gaussian exact reuse, \(\rho_t^{\mathrm{eff}}=\rho_t^{\mathrm{hist}}+\rho_t^{\mathrm{new}}\). \\

\(\rho_t^{\min}\) & Minimum effective support required to satisfy the declared threshold \(U_t\) from scratch. \\

\(\rho_t^{\mathrm{new,min}}\) & Minimum new support still required after crediting historical support:
\(\rho_t^{\mathrm{new,min}}=\max\{0,\rho_t^{\min}-\rho_t^{\mathrm{hist}}\}\). \\

\(\rho_t^{\mathrm{avail}}\) & New answering support available to allocate at time \(t\). \\

\(r_t(\rho)\) & Error surrogate mapping effective support \(\rho\) to a certified error bound for query \(q_t\). Larger support corresponds to lower error. \\

\(a_t\) & Residual support gap in scarcity:
\(a_t=\rho_t^{\min}-\rho_t^{\mathrm{hist}}\). \\

\(\mu_t\) & Conservation parameter in the scarcity objective, penalising current expenditure under scarcity. \\

\(w_t\) & Positive difficulty weight used in the weighted scarcity objective. \\

\(\operatorname{QIF}_t\) & Query Influence Factor: a diagnostic difficulty and instability signal used to define \(w_t\). It is not an importance ranking over queries. \\

\(m_t^{\mathrm{LB}}\) & Private lower-confidence scale surrogate used in diagnostic calibration and QIF. \\

\(\tilde v_t\) & Diagnostic variability or instability signal used in QIF. It is not the Gaussian noise variance of the released answer. \\

\(b_t\) & Rarity or imbalance signal used in QIF. \\

\(\alpha\) & Exponent controlling the sharpness of the QIF-derived difficulty weight:
\(w_t=\operatorname{QIF}_t^\alpha\). \\

\(\beta\) & Parameter controlling the relative contribution of the rarity/imbalance signal \(b_t\) in QIF. \\

\(g(\cdot)\) & Positive increasing scale normaliser used in QIF. \\

\(\pi^\star\) & Proposed residual allocator. \\

\(\pi^{\mathrm{SA}}\) & State-agnostic threshold-pricing allocator that does not credit historical support. \\

\(\pi^{\mathrm{AC}}\) & Accuracy-aware non-crediting allocator (APEx-style; Definition~\ref{def:ac}). \\

\(\pi^{\mathrm{BP}}\) & Budget-partitioning allocator (Pujol et al.-style; Definition~\ref{def:bp}). \\

\(\mathrm{OPT}\) & Offline optimum that knows the full workload in advance (Definition~\ref{def:offline}). \\

\(\Cans(\pi,\mathcal{W})\) & Total new answering expenditure: \(\sum_t\rho_{t,\pi}^{\mathrm{new}}\). \\

\(\Nsat(\pi,\mathcal{W})\) & Number of threshold-satisfied queries: \(|\{t:\rho_t^{\mathrm{eff}}\geq\rho_t^{\min}\}|\). \\

\(\alpha_{\mathrm{ov}}\) & Empirical exact-overlap fraction in the witness construction. \\

\(\bar{\rho}^{\min}\) & Average target-implied support in the comparison theorem. \\

\(\sigma_{\mathrm{ov}}\) & Budget-weighted reuse savings fraction used in the diagnostic-amortisation identity. \\

\(\xi\) & Diagnostic budget share:
\(\xi=\rho_{\mathrm{est}}/\rho_{\mathrm{tot}}\). \\
\hline
\end{tabular}
\end{center}
\section{Diagnostic Calibration}
\label{app:diagnostics}
For scalar counts, let
\begin{equation}
\tilde m_t=m_t+Z_t,
\qquad Z_t\sim\mathcal N(0,\sigma_{m,t}^2).
\end{equation}
Define
\begin{equation}
m_t^{\lb}=\max\{\tilde m_t-z_{1-\gamma_t}\sigma_{m,t},m_{\min}\},
\qquad
\rhomin=\left(\frac{\kappa_t}{m_t^{\lb}U_t}\right)^2,
\quad \kappa_t=\frac{z_{1-\gamma_t/2}}{\sqrt2}.
\end{equation}
Then $\Pr[m_t\geq m_t^{\lb}]\geq1-\gamma_t$, and the calibrated support does not underestimate the oracle support with probability at least $1-\gamma_t$ under the stated diagnostic model.

\begin{lemma}[Diagnostic inflation bound]
Let $\rho_t^{\mathrm{oracle}}=(\kappa_t/(m_tU_t))^2$. Assume $m_t>(c+z_{1-\gamma_t})\sigma_{m,t}$ and $m_t^{\lb}>m_{\min}$ on the event $Z_t\geq-c\sigma_{m,t}$. Then, on that event,
\begin{equation}
\frac{\rhomin}{\rho_t^{\mathrm{oracle}}}
\leq
\left(\frac{m_t}{m_t-(c+z_{1-\gamma_t})\sigma_{m,t}}\right)^2.
\end{equation}
The event has probability $\Phi(c)$.
\end{lemma}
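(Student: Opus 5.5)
We prove the two claims in turn: the probability bound $\Pr[m_t\geq m_t^{\lb}]\geq 1-\gamma_t$ with its consequence that $\rhomin$ does not underestimate the oracle support, and then the inflation bound on the event $E_c=\{Z_t\geq -c\sigma_{m,t}\}$.

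\emph{Coverage.} Write $\tilde m_t-z_{1-\gamma_t}\sigma_{m,t}=m_t+Z_t-z_{1-\gamma_t}\sigma_{m,t}$. This quantity is at most $m_t$ exactly when $Z_t\leq z_{1-\gamma_t}\sigma_{m,t}$, and Gaussianity gives that event probability $1-\gamma_t$. On that event, whenever the floor is inactive we get $m_t^{\lb}\leq m_t$. Since $\rho\mapsto(\kappa_t/(mU_t))^2$ is decreasing in $m$, it follows that $\rhomin\geq\rho_t^{\mathrm{oracle}}$. The floor $m_{\min}$ needs the implicit assumption $m_{\min}\leq m_t$, which I would state explicitly.

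\emph{Inflation.} On $E_c$ I would use the hypothesis $m_t^{\lb}>m_{\min}$ to drop the max, so that
\[
m_t^{\lb}=m_t+Z_t-z_{1-\gamma_t}\sigma_{m,t}\geq m_t-(c+z_{1-\gamma_t})\sigma_{m,t}>0,
\]
where positivity follows from the assumption $m_t>(c+z_{1-\gamma_t})\sigma_{m,t}$. Taking the ratio, $\kappa_t$ and $U_t$ cancel:
\[
\rhomin/\rho_t^{\mathrm{oracle}}=(m_t/m_t^{\lb})^2.
\]
Because $x\mapsto(m_t/x)^2$ is decreasing on $x>0$, the lower bound on $m_t^{\lb}$ yields the stated bound. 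Finally, $\Pr[E_c]=\Pr[Z_t/\sigma_{m,t}\geq -c]=1-\Phi(-c)=\Phi(c)$ by the symmetry of the standard normal.

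\emph{Main obstacle.} The only delicate step is the floor in the definition of $m_t^{\lb}$. For the inflation bound the hypothesis removes it directly. For coverage I would add the assumption $m_{\min}\leq m_t$, or else note that the claim holds conditionally on the floor being inactive. The rest is monotonicity and Gaussian tail bookkeeping.
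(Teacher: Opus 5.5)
Your argument is correct and is the direct computation the paper leaves implicit; the lemma is stated in the appendix without proof. You cancel $\kappa_t$ and $U_t$ to get $\rhomin/\rho_t^{\mathrm{oracle}}=(m_t/m_t^{\lb})^2$, lower-bound $m_t^{\lb}$ by $m_t-(c+z_{1-\gamma_t})\sigma_{m,t}>0$ on the event, and use $1-\Phi(-c)=\Phi(c)$. Two small remarks. First, the coverage claim $\Pr[m_t\geq m_t^{\lb}]\geq 1-\gamma_t$ belongs to the surrounding text rather than to the lemma, and your caveat $m_{\min}\leq m_t$ is genuinely needed there. Second, for the inflation bound even the hypothesis $m_t^{\lb}>m_{\min}$ is superfluous: $m_t^{\lb}\geq\tilde m_t-z_{1-\gamma_t}\sigma_{m,t}$ always holds, and a larger $m_t^{\lb}$ only shrinks $\rhomin$.
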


\section{Approximate Reuse and Bias Envelope}
\label{app:bias}

For approximate reuse, suppose a historical answer targets \(q_{\mathrm{hist}}\) rather than \(q_t\), with
\begin{equation}
|q_t(D)-q_{\mathrm{hist}}(D)|\leq B_t.
\end{equation}
The bias envelope \(B_t\) is supplied externally or derived from public-domain constraints. It affects utility, not privacy: the mechanism either uses \(B_t\) as a guard over already released outputs or falls back to a newly accounted answer.

Let \(h_t\in[0,\rho_t^{\mathrm{hist}}]\) denote the amount of historical support admitted into approximate fusion. This quantity is not the QIF difficulty weight \(w_t\): \(w_t\) controls scarcity allocation, whereas \(h_t\) controls how much historical support is allowed into the approximate fusion step. With fusion weight
\begin{equation}
\lambda_t=\frac{h_t}{h_t+\rho_t^{\mathrm{new}}},
\end{equation}
the squared bias is at most \(\lambda_t^2B_t^2\), and the variance is
\begin{equation}
\frac{1}{2(h_t+\rho_t^{\mathrm{new}})}.
\end{equation}
Hence
\begin{equation}
\operatorname{MSE}(\tilde q_t)
\leq
\left(
\frac{h_t}{h_t+\rho_t^{\mathrm{new}}}
\right)^2B_t^2
+
\frac{1}{2(h_t+\rho_t^{\mathrm{new}})}.
\end{equation}

\section{Proof Details}
\label{app:proofs}
\subsection{QIF-weighted scarcity differentiation proof}
\begin{proof}
For part (1), since \(g(m_i^{\mathrm{LB}})>0\), the numerator
\(\tilde v_i+\beta b_i\) is non-decreasing in both \(\tilde v_i\) and \(b_i\).
Because \(g\) is increasing, increasing \(m_i^{\mathrm{LB}}\) increases the
denominator and therefore weakly decreases \(\operatorname{QIF}_i\). Since
\(\alpha>0\), the transformation \(w_i=\operatorname{QIF}_i^\alpha\) preserves
these monotonicities.

For part (2), in the active-scarcity regime the hinge in the shortfall term is
active, so the objective for query \(i\) is
\begin{equation}
\label{eq:qif-proof-objective}
F_i(\rho_i)=w_i(a_i-\rho_i)^2+\mu\rho_i.
\end{equation}
This is a strictly convex quadratic because \(w_i>0\). Differentiating gives
\begin{equation}
\label{eq:qif-proof-derivative}
F_i'(\rho_i)
=
-2w_i(a_i-\rho_i)+\mu.
\end{equation}
Setting \(F_i'(\rho_i)=0\) yields
\begin{equation}
\label{eq:qif-proof-solution}
\rho_i^\star
=
a_i-\frac{\mu}{2w_i}.
\end{equation}
By assumption, projection does not clip the solution, so this is the constrained
optimiser. The residual shortfall is therefore
\begin{equation}
\label{eq:qif-proof-shortfall}
S_i^\star
=
a_i-\rho_i^\star
=
a_i-\left(a_i-\frac{\mu}{2w_i}\right)
=
\frac{\mu}{2w_i}.
\end{equation}

For part (3), if \(a_s=a_t\), then using~\eqref{eq:qif-compact-result},
\begin{equation}
\label{eq:qif-proof-difference}
\rho_t^\star-\rho_s^\star
=
\left(a_t-\frac{\mu}{2w_t}\right)
-
\left(a_s-\frac{\mu}{2w_s}\right)
=
\frac{\mu}{2}
\left(
\frac{1}{w_s}-\frac{1}{w_t}
\right).
\end{equation}
If \(w_t>w_s>0\), then \(1/w_s>1/w_t\), so this difference is strictly positive.
The shortfall comparison follows directly from~\eqref{eq:qif-compact-result}.
\end{proof}

\begin{remark}[Role and limitation of QIF]
The theorem shows how QIF affects allocation once the weighted scarcity objective
has been chosen. It does not claim that QIF is uniquely optimal, Bayes-optimal,
or regret-minimising. Other positive weights can be substituted into the same
scarcity objective without changing the accounting or closed-form optimisation
results.
\end{remark}

\begin{remark}[Diagnostic variability versus Gaussian noise variance]
The variability signal \(\tilde v_i\) in QIF is not the Gaussian noise variance
of the released answer. It is a diagnostic difficulty signal derived before
allocation and used only to form the scarcity weight. By contrast, the variance
appearing in the scalar Gaussian fusion rule is the estimator/noise variance
induced by the zCDP answering cost, equal to \(1/(2\rho)\).
\end{remark}

\subsection{Memoryless comparison}
A state-agnostic threshold-pricing allocator cannot credit $\rhohist$, so it must allocate from-scratch support when feasible. The residual allocator allocates $\max\{0,\rhomin-\rhohist\}$ in the feasible regime. If $\rhohist>0$ this is strictly smaller than the from-scratch allocation, including the case $\rhohist\geq\rhomin$ where the residual allocation is zero. Dividing by the from-scratch support gives~\eqref{eq:ratio}.

\subsection{Exact-repeat Witness Separation}
Choose a query $q^\dagger$ with target-implied support $\rho^{\min}$ and release it once. Place exact repeats of $q^\dagger$ at $m=\lfloor\alpha_{\mathrm{ov}}(K-1)\rfloor$ later positions, and fill remaining positions with disjoint queries. At each repeat, the residual allocator reuses the earlier DP output, so $\rho_t^{\mathrm{hist}}\geq \rho^{\min}$, and spends zero new answering support. A state-agnostic threshold-pricing allocator allocates from scratch. Summing $m$ gaps gives $m\rho^{\min}$, which is $\Omega(\alpha_{\mathrm{ov}}K\bar\rho^{\min})$ under the constructed homogeneous support case.

\subsection{Diagnostic Amortisation Identity}
Let $\sigma_{\mathrm{ov}}$ be the budget-weighted reuse savings fraction relative to $\pi^{\mathrm{SA}}$, and let $\xi=\rho_{\mathrm{est}}/\rho_{\mathrm{tot}}$. The diagnostic cost is offset whenever
\begin{equation}
\sigma_{\mathrm{ov}}>
\frac{\xi\rho_{\mathrm{tot}}}{\sum_{t=1}^{K}\rho^{\mathrm{new}}_{t,\pi^{\mathrm{SA}}}}.
\end{equation}
This reports how much realised reuse is needed to offset diagnostic expenditure; it does not predict whether a workload will achieve that reuse level.

\section{Competitive Analysis Proofs}
\label{app:competitive-proofs}

\subsection{Proof of Proposition~\ref{prop:diagnostic-overhead}}
\begin{proof}
By Theorem~\ref{thm:competitive}, \(\Cans(\pi^\star,\mathcal{W})=\Cans(\mathrm{OPT},\mathcal{W})\) in the feasible regime. The full mechanism's total expenditure is \(\rho_{\mathrm{est}}+\Cans(\pi^\star,\mathcal{W})\). Dividing by \(\Cans(\mathrm{OPT},\mathcal{W})\) gives the ratio \(1+\rho_{\mathrm{est}}/\Cans(\mathrm{OPT},\mathcal{W})\). When \(\Cans(\mathrm{OPT},\mathcal{W})\geq\rho_{\mathrm{ans}}=(1-\xi)\rho_{\mathrm{tot}}\), the overhead ratio is bounded by \(1+\xi\rho_{\mathrm{tot}}/((1-\xi)\rho_{\mathrm{tot}})=1/(1-\xi)\).
\end{proof}

\subsection{Full proof of Theorem~\ref{thm:scarcity-impossibility}}
\begin{proof}
Fix \(\rho_{\mathrm{ans}}=1\) and an integer \(n\geq 2\). Construct two workloads of mutually distinct queries (no reuse is possible on either):
\begin{itemize}
\item \(\mathcal{W}_A\): A single query \(q_1\) with \(\rho_1^{\min}=1\).
\item \(\mathcal{W}_B\): The same first query \(q_1\) with \(\rho_1^{\min}=1\), followed by \(n\) distinct queries \(q_2,\ldots,q_{n+1}\), each with \(\rho_t^{\min}=1/n\).
\end{itemize}
At \(t=1\), the online allocator \(\pi\) observes \(q_1\) with \(\rho_1^{\min}=1\) and must choose \(\rho_1^{\mathrm{new}}\) without knowing which workload will materialise.

\textbf{Case 1:} \(\pi\) answers \(q_1\) (allocates \(\rho_1^{\mathrm{new}}\geq 1\)). Remaining budget: at most \(0\).
On \(\mathcal{W}_A\): \(\Nsat(\pi,\mathcal{W}_A)=1\) and \(\Nsat(\mathrm{OPT},\mathcal{W}_A)=1\). Ratio: \(1\).
On \(\mathcal{W}_B\): \(\pi\) satisfies only \(q_1\), so \(\Nsat(\pi,\mathcal{W}_B)=1\). \(\mathrm{OPT}\), knowing \(\mathcal{W}_B\), skips \(q_1\) and answers all of \(q_2,\ldots,q_{n+1}\) at total cost \(n\cdot(1/n)=1\). Thus \(\Nsat(\mathrm{OPT},\mathcal{W}_B)=n\), and the ratio is \(1/n\).

\textbf{Case 2:} \(\pi\) does not answer \(q_1\) (allocates \(\rho_1^{\mathrm{new}}<1\)).
On \(\mathcal{W}_A\): \(\Nsat(\pi,\mathcal{W}_A)=0\) and \(\Nsat(\mathrm{OPT},\mathcal{W}_A)=1\). Ratio: \(0<1/n\).

For any deterministic \(\pi\), the adversary selects \(\mathcal{W}_B\) if \(\pi\) answers \(q_1\) (ratio \(1/n\)) and \(\mathcal{W}_A\) if \(\pi\) skips (ratio \(0\)). The best deterministic strategy is to answer \(q_1\), achieving worst-case ratio \(1/n\).
\end{proof}

\section{Implementation Checks}
The implementation checks are supplementary and are not used as formal proof. They test whether the code follows the regimes and inequalities stated in the main paper. In the larger experimental package, the per-query savings bound was checked on 200,000 individual releases without violation, and the exact-repeat avoidable-expenditure construction matched the predicted values across 100 runs. Regime-frequency checks showed that scarcity declined as total budget increased, while cap-bound cases remained rare. Diagnostic checks followed the predicted SNR pattern: over-allocation approached one at high SNR and became large in the lowest-SNR bucket. QIF checks showed meaningful allocation differentiation in non-degenerate medium- and high-overlap workloads, but weak differentiation in the low-overlap control. These checks support the implementation, but do not establish additional theoretical claims.

\end{document}